\newcommand{\argemp}[2]
	{\if&#1&\else#2\fi}
\newcommand{\argdef}[2]
	{\if&#1&#2\else#1\fi}
\newcommand{\argint}[3]
	{\if&#2&\else#1#2#3\fi}
\newcommand{\argext}[3]
	{\if&#1&#3\else#1\if&#3&\else#2#3\fi\fi}
\newcommandx{\argsubsup}[3][2=, 3=]
	{\def\argsubscript{{#2}}\def\argsuperscript{{#3}}#1}
\newcommandx{\argind}[9][2=, 3=, 4=, 5=, 6=, 7=, 8=, 9=]
	{%
	\switch[#1=]%
		\case{0}#2%
		\case{1}#3%
		\case{2}#4%
		\case{3}#5%
		\case{4}#6%
		\case{5}#7%
		\case{6}#8%
		\case{7}#9%
		\otherwise\ensuremath{\clubsuit}%
	\endswitch%
	}
\newcommand{\arga}[1]
	{#1}
\newcommand{\argb}[2]
	{\argext{\arga{#1}}{,\allowbreak}{#2}}
\newcommand{\argc}[3]
	{\argext{\argb{#1}{#2}}{,\allowbreak}{#3}}
\newcommand{\argd}[4]
	{\argext{\argc{#1}{#2}{#3}}{,\allowbreak}{#4}}
\newcommand{\arge}[5]
	{\argext{\argd{#1}{#2}{#3}{#4}}{,\allowbreak}{#5}}
\newcommand{\argf}[6]
	{\argext{\arge{#1}{#2}{#3}{#4}{#5}}{,\allowbreak}{#6}}
\newcommand{\argg}[7]
	{\argext{\argf{#1}{#2}{#3}{#4}{#5}{#6}}{,\allowbreak}{#7}}
\newcommand{\argh}[8]
	{\argext{\argg{#1}{#2}{#3}{#4}{#5}{#6}{#7}}{,\allowbreak}{#8}}
\newcommand{\argi}[9]
	{\argext{\argh{#1}{#2}{#3}{#4}{#5}{#6}{#7}{#8}}{,\allowbreak}{#9}}
\newcommand{\txtfnt}[2][]
	{{%
	\IfStrEq{#1}{}
		{#2}
		{%
		\StrLeft{#1}{2}[\optbgn]%
		\StrGobbleLeft{#1}{2}[\optend]%
		\IfStrEqCase{\optbgn}
			{%
			{Rm}{\rmfamily\txtfnt[\optend]{#2}}%
			{Sf}{\sffamily\txtfnt[\optend]{#2}}%
			{Tt}{\ttfamily\txtfnt[\optend]{#2}}%
			{Up}{\upshape\txtfnt[\optend]{#2}}%
			{It}{\itshape\txtfnt[\optend]{#2}}%
			{Sl}{\slshape\txtfnt[\optend]{#2}}%
			{Sc}{\scshape\txtfnt[\optend]{#2}}%
			{Md}{\mdseries\txtfnt[\optend]{#2}}%
			{Bf}{\bfseries\txtfnt[\optend]{#2}}%
			{Em}{\emph{\txtfnt[\optend]{#2}}}%
			}
			[\ensuremath{\clubsuit}]%
		}%
	}}
\newcommand{\txtsub}[2][]
	{\argemp{#2}{\ensuremath{_{\text{\txtfnt[#1]{#2}}}}}}
\newcommand{\txtsup}[2][]
	{\argemp{#2}{\ensuremath{^{\text{\txtfnt[#1]{#2}}}}}}
\newcommandx{\txt}[4][1=, 3=, 4=]
	{{\txtfnt[#1]{#2}\ensuremath{\txtsub[#1]{#3}\txtsup[#1]{#4}}}}
\newcommandx{\txtarg}[5][1=, 3=, 4=]
	{{\txt[#1]{#2}[#3][#4]\argint{(}{#5}{)}}}
\newcommand{\txtstyname}{RmScMd}
\newcommand{\txtname}[1][]
	{\txt[\argdef{#1}{\txtstyname}]}
\newcommand{\txtargname}[1][]
	{\txtarg[\argdef{#1}{\txtstyname}]}
\newcommand{\txtstyabr}{Em}
\newcommand{\txtabr}[1][]
	{\txt[\argdef{#1}{\txtstyabr}]}
\newcommandx{\mthfnt}[3][1=, 2=0]
	{{%
	\IfStrEqCase{#1}
		{%
		{}%
			{#3}%
		{Name}%
			{%
			\IfStrEqCase{#2}
				{%
				{0}{\mathcal{#3}}%
				{1}{\mathscr{#3}}%
				{2}{\mathfrak{#3}}%
				{3}{\mathbb{#3}}%
				}
				[\ensuremath{\clubsuit}]%
			}%
		{Set}%
			{%
			\IfStrEqCase{#2}
				{%
				{0}{\mathrm{#3}}%
				{1}{\mathsf{#3}}%
				{2}{\mathbb{#3}}%
				{3}{\mathbf{#3}}%
				}
				[\ensuremath{\clubsuit}]%
			}%
		{Fun}%
			{%
			\IfStrEqCase{#2}
				{%
				{0}{\mathsf{#3}}%
				{1}{\mathrm{#3}}%
				}
				[\ensuremath{\clubsuit}]%
			}%
		{Rel}%
			{%
			\IfStrEqCase{#2}
				{%
				{0}{\mathit{#3}}%
				{1}{\mathtt{#3}}%
				}
				[\ensuremath{\clubsuit}]%
			}%
		{Sym}%
			{%
			\IfStrEqCase{#2}
				{%
				{0}{\mathtt{#3}}%
				{1}{\mathbf{#3}}%
				}
				[\ensuremath{\clubsuit}]%
			}%
		{Elm}%
			{\mathnormal{#3}}
		}
		[\ensuremath{\clubsuit}]%
	}}
\newcommand{\mthsub}[1]
	{\argemp{#1}{\ensuremath{_{\mathnormal{#1}}}}}
\newcommand{\mthsup}[1]
	{\argemp{#1}{\ensuremath{^{\mathnormal{#1}}}}}
\newcommandx{\mth}[5][1=, 2=0, 4=, 5=]
	{{\ensuremath{\mthfnt[#1][#2]{#3}\mthsub{#4}\mthsup{#5}}}}
\newcommandx{\mtharg}[6][1=, 2=0, 4=, 5=]
	{{\mth[#1][#2]{#3}[#4][#5]\ensuremath{\argint{(}{#6}{)}}}}
\newcommand{\mthempty}
	{\mth[][]}
\newcommand{\mthstyname}{0}
\newcommand{\mthname}[1][]
	{\mth[Name][\argdef{#1}{\mthstyname}]}
\newcommand{\mthstyset}{0}
\newcommand{\mthset}[1][]
	{\mth[Set][\argdef{#1}{\mthstyset}]}
\newcommand{\mthargset}[1][]
	{\mtharg[Set][\argdef{#1}{\mthstyset}]}
\newcommand{\mthstyfun}{0}
\newcommand{\mthfun}[1][]
	{\mth[Fun][\argdef{#1}{\mthstyfun}]}
\newcommand{\mthargfun}[1][]
	{\mtharg[Fun][\argdef{#1}{\mthstyfun}]}
\newcommand{\mthstyrel}{0}
\newcommand{\mthrel}[1][]
	{\mth[Rel][\argdef{#1}{\mthstyrel}]}
\newcommand{\mthstysym}{0}
\newcommand{\mthsym}[1][]
	{\mth[Sym][\argdef{#1}{\mthstysym}]}
\newcommand{\mthstyelm}{0}
\newcommand{\mthelm}[1][]
	{\mth[Elm][\argdef{#1}{\mthstyelm}]}
\newcommandx{\AName}[4][1=, 2=, 3=, 4=]{\mthname[#4]{A#3}[#1][#2]}
\newcommandx{\BName}[4][1=, 2=, 3=, 4=]{\mthname[#4]{B#3}[#1][#2]}
\newcommandx{\CName}[4][1=, 2=, 3=, 4=]{\mthname[#4]{C#3}[#1][#2]}
\newcommandx{\DName}[4][1=, 2=, 3=, 4=]{\mthname[#4]{D#3}[#1][#2]}
\newcommandx{\EName}[4][1=, 2=, 3=, 4=]{\mthname[#4]{E#3}[#1][#2]}
\newcommandx{\FName}[4][1=, 2=, 3=, 4=]{\mthname[#4]{F#3}[#1][#2]}
\newcommandx{\GName}[4][1=, 2=, 3=, 4=]{\mthname[#4]{G#3}[#1][#2]}
\newcommandx{\HName}[4][1=, 2=, 3=, 4=]{\mthname[#4]{H#3}[#1][#2]}
\newcommandx{\IName}[4][1=, 2=, 3=, 4=]{\mthname[#4]{I#3}[#1][#2]}
\newcommandx{\JName}[4][1=, 2=, 3=, 4=]{\mthname[#4]{J#3}[#1][#2]}
\newcommandx{\KName}[4][1=, 2=, 3=, 4=]{\mthname[#4]{K#3}[#1][#2]}
\newcommandx{\LName}[4][1=, 2=, 3=, 4=]{\mthname[#4]{L#3}[#1][#2]}
\newcommandx{\MName}[4][1=, 2=, 3=, 4=]{\mthname[#4]{M#3}[#1][#2]}
\newcommandx{\NName}[4][1=, 2=, 3=, 4=]{\mthname[#4]{N#3}[#1][#2]}
\newcommandx{\OName}[4][1=, 2=, 3=, 4=]{\mthname[#4]{O#3}[#1][#2]}
\newcommandx{\PName}[4][1=, 2=, 3=, 4=]{\mthname[#4]{P#3}[#1][#2]}
\newcommandx{\QName}[4][1=, 2=, 3=, 4=]{\mthname[#4]{Q#3}[#1][#2]}
\newcommandx{\RName}[4][1=, 2=, 3=, 4=]{\mthname[#4]{R#3}[#1][#2]}
\newcommandx{\SName}[4][1=, 2=, 3=, 4=]{\mthname[#4]{S#3}[#1][#2]}
\newcommandx{\TName}[4][1=, 2=, 3=, 4=]{\mthname[#4]{T#3}[#1][#2]}
\newcommandx{\UName}[4][1=, 2=, 3=, 4=]{\mthname[#4]{U#3}[#1][#2]}
\newcommandx{\VName}[4][1=, 2=, 3=, 4=]{\mthname[#4]{V#3}[#1][#2]}
\newcommandx{\WName}[4][1=, 2=, 3=, 4=]{\mthname[#4]{W#3}[#1][#2]}
\newcommandx{\XName}[4][1=, 2=, 3=, 4=]{\mthname[#4]{X#3}[#1][#2]}
\newcommandx{\YName}[4][1=, 2=, 3=, 4=]{\mthname[#4]{Y#3}[#1][#2]}
\newcommandx{\ZName}[4][1=, 2=, 3=, 4=]{\mthname[#4]{Z#3}[#1][#2]}
\newcommandx{\ASet}[4][1=, 2=, 3=, 4=]{\mthset[#4]{A#3}[#1][#2]}
\newcommandx{\BSet}[4][1=, 2=, 3=, 4=]{\mthset[#4]{B#3}[#1][#2]}
\newcommandx{\CSet}[4][1=, 2=, 3=, 4=]{\mthset[#4]{C#3}[#1][#2]}
\newcommandx{\DSet}[4][1=, 2=, 3=, 4=]{\mthset[#4]{D#3}[#1][#2]}
\newcommandx{\ESet}[4][1=, 2=, 3=, 4=]{\mthset[#4]{E#3}[#1][#2]}
\newcommandx{\FSet}[4][1=, 2=, 3=, 4=]{\mthset[#4]{F#3}[#1][#2]}
\newcommandx{\GSet}[4][1=, 2=, 3=, 4=]{\mthset[#4]{G#3}[#1][#2]}
\newcommandx{\HSet}[4][1=, 2=, 3=, 4=]{\mthset[#4]{H#3}[#1][#2]}
\newcommandx{\ISet}[4][1=, 2=, 3=, 4=]{\mthset[#4]{I#3}[#1][#2]}
\newcommandx{\JSet}[4][1=, 2=, 3=, 4=]{\mthset[#4]{J#3}[#1][#2]}
\newcommandx{\KSet}[4][1=, 2=, 3=, 4=]{\mthset[#4]{K#3}[#1][#2]}
\newcommandx{\LSet}[4][1=, 2=, 3=, 4=]{\mthset[#4]{L#3}[#1][#2]}
\newcommandx{\MSet}[4][1=, 2=, 3=, 4=]{\mthset[#4]{M#3}[#1][#2]}
\newcommandx{\NSet}[4][1=, 2=, 3=, 4=]{\mthset[#4]{N#3}[#1][#2]}
\newcommandx{\OSet}[4][1=, 2=, 3=, 4=]{\mthset[#4]{O#3}[#1][#2]}
\newcommandx{\PSet}[4][1=, 2=, 3=, 4=]{\mthset[#4]{P#3}[#1][#2]}
\newcommandx{\QSet}[4][1=, 2=, 3=, 4=]{\mthset[#4]{Q#3}[#1][#2]}
\newcommandx{\RSet}[4][1=, 2=, 3=, 4=]{\mthset[#4]{R#3}[#1][#2]}
\newcommandx{\SSet}[4][1=, 2=, 3=, 4=]{\mthset[#4]{S#3}[#1][#2]}
\newcommandx{\TSet}[4][1=, 2=, 3=, 4=]{\mthset[#4]{T#3}[#1][#2]}
\newcommandx{\USet}[4][1=, 2=, 3=, 4=]{\mthset[#4]{U#3}[#1][#2]}
\newcommandx{\VSet}[4][1=, 2=, 3=, 4=]{\mthset[#4]{V#3}[#1][#2]}
\newcommandx{\WSet}[4][1=, 2=, 3=, 4=]{\mthset[#4]{W#3}[#1][#2]}
\newcommandx{\XSet}[4][1=, 2=, 3=, 4=]{\mthset[#4]{X#3}[#1][#2]}
\newcommandx{\YSet}[4][1=, 2=, 3=, 4=]{\mthset[#4]{Y#3}[#1][#2]}
\newcommandx{\ZSet}[4][1=, 2=, 3=, 4=]{\mthset[#4]{Z#3}[#1][#2]}
\newcommandx{\aSet}[4][1=, 2=, 3=, 4=]{\mthset[#4]{a#3}[#1][#2]}
\newcommandx{\bSet}[4][1=, 2=, 3=, 4=]{\mthset[#4]{b#3}[#1][#2]}
\newcommandx{\cSet}[4][1=, 2=, 3=, 4=]{\mthset[#4]{c#3}[#1][#2]}
\newcommandx{\dSet}[4][1=, 2=, 3=, 4=]{\mthset[#4]{d#3}[#1][#2]}
\newcommandx{\eSet}[4][1=, 2=, 3=, 4=]{\mthset[#4]{e#3}[#1][#2]}
\newcommandx{\fSet}[4][1=, 2=, 3=, 4=]{\mthset[#4]{f#3}[#1][#2]}
\newcommandx{\gSet}[4][1=, 2=, 3=, 4=]{\mthset[#4]{g#3}[#1][#2]}
\newcommandx{\hSet}[4][1=, 2=, 3=, 4=]{\mthset[#4]{h#3}[#1][#2]}
\newcommandx{\iSet}[4][1=, 2=, 3=, 4=]{\mthset[#4]{i#3}[#1][#2]}
\newcommandx{\jSet}[4][1=, 2=, 3=, 4=]{\mthset[#4]{j#3}[#1][#2]}
\newcommandx{\kSet}[4][1=, 2=, 3=, 4=]{\mthset[#4]{k#3}[#1][#2]}
\newcommandx{\lSet}[4][1=, 2=, 3=, 4=]{\mthset[#4]{l#3}[#1][#2]}
\newcommandx{\mSet}[4][1=, 2=, 3=, 4=]{\mthset[#4]{m#3}[#1][#2]}
\newcommandx{\nSet}[4][1=, 2=, 3=, 4=]{\mthset[#4]{n#3}[#1][#2]}
\newcommandx{\oSet}[4][1=, 2=, 3=, 4=]{\mthset[#4]{o#3}[#1][#2]}
\newcommandx{\pSet}[4][1=, 2=, 3=, 4=]{\mthset[#4]{p#3}[#1][#2]}
\newcommandx{\qSet}[4][1=, 2=, 3=, 4=]{\mthset[#4]{q#3}[#1][#2]}
\newcommandx{\rSet}[4][1=, 2=, 3=, 4=]{\mthset[#4]{r#3}[#1][#2]}
\newcommandx{\sSet}[4][1=, 2=, 3=, 4=]{\mthset[#4]{s#3}[#1][#2]}
\newcommandx{\tSet}[4][1=, 2=, 3=, 4=]{\mthset[#4]{t#3}[#1][#2]}
\newcommandx{\uSet}[4][1=, 2=, 3=, 4=]{\mthset[#4]{u#3}[#1][#2]}
\newcommandx{\vSet}[4][1=, 2=, 3=, 4=]{\mthset[#4]{v#3}[#1][#2]}
\newcommandx{\wSet}[4][1=, 2=, 3=, 4=]{\mthset[#4]{w#3}[#1][#2]}
\newcommandx{\xSet}[4][1=, 2=, 3=, 4=]{\mthset[#4]{x#3}[#1][#2]}
\newcommandx{\ySet}[4][1=, 2=, 3=, 4=]{\mthset[#4]{y#3}[#1][#2]}
\newcommandx{\zSet}[4][1=, 2=, 3=, 4=]{\mthset[#4]{z#3}[#1][#2]}
\newcommandx{\AFun}[4][1=, 2=, 3=, 4=]{\mthfun[#4]{A#3}[#1][#2]}
\newcommandx{\BFun}[4][1=, 2=, 3=, 4=]{\mthfun[#4]{B#3}[#1][#2]}
\newcommandx{\CFun}[4][1=, 2=, 3=, 4=]{\mthfun[#4]{C#3}[#1][#2]}
\newcommandx{\DFun}[4][1=, 2=, 3=, 4=]{\mthfun[#4]{D#3}[#1][#2]}
\newcommandx{\EFun}[4][1=, 2=, 3=, 4=]{\mthfun[#4]{E#3}[#1][#2]}
\newcommandx{\FFun}[4][1=, 2=, 3=, 4=]{\mthfun[#4]{F#3}[#1][#2]}
\newcommandx{\GFun}[4][1=, 2=, 3=, 4=]{\mthfun[#4]{G#3}[#1][#2]}
\newcommandx{\HFun}[4][1=, 2=, 3=, 4=]{\mthfun[#4]{H#3}[#1][#2]}
\newcommandx{\IFun}[4][1=, 2=, 3=, 4=]{\mthfun[#4]{I#3}[#1][#2]}
\newcommandx{\JFun}[4][1=, 2=, 3=, 4=]{\mthfun[#4]{J#3}[#1][#2]}
\newcommandx{\KFun}[4][1=, 2=, 3=, 4=]{\mthfun[#4]{K#3}[#1][#2]}
\newcommandx{\LFun}[4][1=, 2=, 3=, 4=]{\mthfun[#4]{L#3}[#1][#2]}
\newcommandx{\MFun}[4][1=, 2=, 3=, 4=]{\mthfun[#4]{M#3}[#1][#2]}
\newcommandx{\NFun}[4][1=, 2=, 3=, 4=]{\mthfun[#4]{N#3}[#1][#2]}
\newcommandx{\OFun}[4][1=, 2=, 3=, 4=]{\mthfun[#4]{O#3}[#1][#2]}
\newcommandx{\PFun}[4][1=, 2=, 3=, 4=]{\mthfun[#4]{P#3}[#1][#2]}
\newcommandx{\QFun}[4][1=, 2=, 3=, 4=]{\mthfun[#4]{Q#3}[#1][#2]}
\newcommandx{\RFun}[4][1=, 2=, 3=, 4=]{\mthfun[#4]{R#3}[#1][#2]}
\newcommandx{\SFun}[4][1=, 2=, 3=, 4=]{\mthfun[#4]{S#3}[#1][#2]}
\newcommandx{\TFun}[4][1=, 2=, 3=, 4=]{\mthfun[#4]{T#3}[#1][#2]}
\newcommandx{\UFun}[4][1=, 2=, 3=, 4=]{\mthfun[#4]{U#3}[#1][#2]}
\newcommandx{\VFun}[4][1=, 2=, 3=, 4=]{\mthfun[#4]{V#3}[#1][#2]}
\newcommandx{\WFun}[4][1=, 2=, 3=, 4=]{\mthfun[#4]{W#3}[#1][#2]}
\newcommandx{\XFun}[4][1=, 2=, 3=, 4=]{\mthfun[#4]{X#3}[#1][#2]}
\newcommandx{\YFun}[4][1=, 2=, 3=, 4=]{\mthfun[#4]{Y#3}[#1][#2]}
\newcommandx{\ZFun}[4][1=, 2=, 3=, 4=]{\mthfun[#4]{Z#3}[#1][#2]}
\newcommandx{\aFun}[4][1=, 2=, 3=, 4=]{\mthfun[#4]{a#3}[#1][#2]}
\newcommandx{\bFun}[4][1=, 2=, 3=, 4=]{\mthfun[#4]{b#3}[#1][#2]}
\newcommandx{\cFun}[4][1=, 2=, 3=, 4=]{\mthfun[#4]{c#3}[#1][#2]}
\newcommandx{\dFun}[4][1=, 2=, 3=, 4=]{\mthfun[#4]{d#3}[#1][#2]}
\newcommandx{\eFun}[4][1=, 2=, 3=, 4=]{\mthfun[#4]{e#3}[#1][#2]}
\newcommandx{\fFun}[4][1=, 2=, 3=, 4=]{\mthfun[#4]{f#3}[#1][#2]}
\newcommandx{\gFun}[4][1=, 2=, 3=, 4=]{\mthfun[#4]{g#3}[#1][#2]}
\newcommandx{\hFun}[4][1=, 2=, 3=, 4=]{\mthfun[#4]{h#3}[#1][#2]}
\newcommandx{\iFun}[4][1=, 2=, 3=, 4=]{\mthfun[#4]{i#3}[#1][#2]}
\newcommandx{\jFun}[4][1=, 2=, 3=, 4=]{\mthfun[#4]{j#3}[#1][#2]}
\newcommandx{\kFun}[4][1=, 2=, 3=, 4=]{\mthfun[#4]{k#3}[#1][#2]}
\newcommandx{\lFun}[4][1=, 2=, 3=, 4=]{\mthfun[#4]{l#3}[#1][#2]}
\newcommandx{\mFun}[4][1=, 2=, 3=, 4=]{\mthfun[#4]{m#3}[#1][#2]}
\newcommandx{\nFun}[4][1=, 2=, 3=, 4=]{\mthfun[#4]{n#3}[#1][#2]}
\newcommandx{\oFun}[4][1=, 2=, 3=, 4=]{\mthfun[#4]{o#3}[#1][#2]}
\newcommandx{\pFun}[4][1=, 2=, 3=, 4=]{\mthfun[#4]{p#3}[#1][#2]}
\newcommandx{\qFun}[4][1=, 2=, 3=, 4=]{\mthfun[#4]{q#3}[#1][#2]}
\newcommandx{\rFun}[4][1=, 2=, 3=, 4=]{\mthfun[#4]{r#3}[#1][#2]}
\newcommandx{\sFun}[4][1=, 2=, 3=, 4=]{\mthfun[#4]{s#3}[#1][#2]}
\newcommandx{\tFun}[4][1=, 2=, 3=, 4=]{\mthfun[#4]{t#3}[#1][#2]}
\newcommandx{\uFun}[4][1=, 2=, 3=, 4=]{\mthfun[#4]{u#3}[#1][#2]}
\newcommandx{\vFun}[4][1=, 2=, 3=, 4=]{\mthfun[#4]{v#3}[#1][#2]}
\newcommandx{\wFun}[4][1=, 2=, 3=, 4=]{\mthfun[#4]{w#3}[#1][#2]}
\newcommandx{\xFun}[4][1=, 2=, 3=, 4=]{\mthfun[#4]{x#3}[#1][#2]}
\newcommandx{\yFun}[4][1=, 2=, 3=, 4=]{\mthfun[#4]{y#3}[#1][#2]}
\newcommandx{\zFun}[4][1=, 2=, 3=, 4=]{\mthfun[#4]{z#3}[#1][#2]}
\newcommandx{\ARel}[4][1=, 2=, 3=, 4=]{\mthrel[#4]{A#3}[#1][#2]}
\newcommandx{\BRel}[4][1=, 2=, 3=, 4=]{\mthrel[#4]{B#3}[#1][#2]}
\newcommandx{\CRel}[4][1=, 2=, 3=, 4=]{\mthrel[#4]{C#3}[#1][#2]}
\newcommandx{\DRel}[4][1=, 2=, 3=, 4=]{\mthrel[#4]{D#3}[#1][#2]}
\newcommandx{\ERel}[4][1=, 2=, 3=, 4=]{\mthrel[#4]{E#3}[#1][#2]}
\newcommandx{\FRel}[4][1=, 2=, 3=, 4=]{\mthrel[#4]{F#3}[#1][#2]}
\newcommandx{\GRel}[4][1=, 2=, 3=, 4=]{\mthrel[#4]{G#3}[#1][#2]}
\newcommandx{\HRel}[4][1=, 2=, 3=, 4=]{\mthrel[#4]{H#3}[#1][#2]}
\newcommandx{\IRel}[4][1=, 2=, 3=, 4=]{\mthrel[#4]{I#3}[#1][#2]}
\newcommandx{\JRel}[4][1=, 2=, 3=, 4=]{\mthrel[#4]{J#3}[#1][#2]}
\newcommandx{\KRel}[4][1=, 2=, 3=, 4=]{\mthrel[#4]{K#3}[#1][#2]}
\newcommandx{\LRel}[4][1=, 2=, 3=, 4=]{\mthrel[#4]{L#3}[#1][#2]}
\newcommandx{\MRel}[4][1=, 2=, 3=, 4=]{\mthrel[#4]{M#3}[#1][#2]}
\newcommandx{\NRel}[4][1=, 2=, 3=, 4=]{\mthrel[#4]{N#3}[#1][#2]}
\newcommandx{\ORel}[4][1=, 2=, 3=, 4=]{\mthrel[#4]{O#3}[#1][#2]}
\newcommandx{\PRel}[4][1=, 2=, 3=, 4=]{\mthrel[#4]{P#3}[#1][#2]}
\newcommandx{\QRel}[4][1=, 2=, 3=, 4=]{\mthrel[#4]{Q#3}[#1][#2]}
\newcommandx{\RRel}[4][1=, 2=, 3=, 4=]{\mthrel[#4]{R#3}[#1][#2]}
\newcommandx{\SRel}[4][1=, 2=, 3=, 4=]{\mthrel[#4]{S#3}[#1][#2]}
\newcommandx{\TRel}[4][1=, 2=, 3=, 4=]{\mthrel[#4]{T#3}[#1][#2]}
\newcommandx{\URel}[4][1=, 2=, 3=, 4=]{\mthrel[#4]{U#3}[#1][#2]}
\newcommandx{\VRel}[4][1=, 2=, 3=, 4=]{\mthrel[#4]{V#3}[#1][#2]}
\newcommandx{\WRel}[4][1=, 2=, 3=, 4=]{\mthrel[#4]{W#3}[#1][#2]}
\newcommandx{\XRel}[4][1=, 2=, 3=, 4=]{\mthrel[#4]{X#3}[#1][#2]}
\newcommandx{\YRel}[4][1=, 2=, 3=, 4=]{\mthrel[#4]{Y#3}[#1][#2]}
\newcommandx{\ZRel}[4][1=, 2=, 3=, 4=]{\mthrel[#4]{Z#3}[#1][#2]}
\newcommandx{\aRel}[4][1=, 2=, 3=, 4=]{\mthrel[#4]{a#3}[#1][#2]}
\newcommandx{\bRel}[4][1=, 2=, 3=, 4=]{\mthrel[#4]{b#3}[#1][#2]}
\newcommandx{\cRel}[4][1=, 2=, 3=, 4=]{\mthrel[#4]{c#3}[#1][#2]}
\newcommandx{\dRel}[4][1=, 2=, 3=, 4=]{\mthrel[#4]{d#3}[#1][#2]}
\newcommandx{\eRel}[4][1=, 2=, 3=, 4=]{\mthrel[#4]{e#3}[#1][#2]}
\newcommandx{\fRel}[4][1=, 2=, 3=, 4=]{\mthrel[#4]{f#3}[#1][#2]}
\newcommandx{\gRel}[4][1=, 2=, 3=, 4=]{\mthrel[#4]{g#3}[#1][#2]}
\newcommandx{\hRel}[4][1=, 2=, 3=, 4=]{\mthrel[#4]{h#3}[#1][#2]}
\newcommandx{\iRel}[4][1=, 2=, 3=, 4=]{\mthrel[#4]{i#3}[#1][#2]}
\newcommandx{\jRel}[4][1=, 2=, 3=, 4=]{\mthrel[#4]{j#3}[#1][#2]}
\newcommandx{\kRel}[4][1=, 2=, 3=, 4=]{\mthrel[#4]{k#3}[#1][#2]}
\newcommandx{\lRel}[4][1=, 2=, 3=, 4=]{\mthrel[#4]{l#3}[#1][#2]}
\newcommandx{\mRel}[4][1=, 2=, 3=, 4=]{\mthrel[#4]{m#3}[#1][#2]}
\newcommandx{\nRel}[4][1=, 2=, 3=, 4=]{\mthrel[#4]{n#3}[#1][#2]}
\newcommandx{\oRel}[4][1=, 2=, 3=, 4=]{\mthrel[#4]{o#3}[#1][#2]}
\newcommandx{\pRel}[4][1=, 2=, 3=, 4=]{\mthrel[#4]{p#3}[#1][#2]}
\newcommandx{\qRel}[4][1=, 2=, 3=, 4=]{\mthrel[#4]{q#3}[#1][#2]}
\newcommandx{\rRel}[4][1=, 2=, 3=, 4=]{\mthrel[#4]{r#3}[#1][#2]}
\newcommandx{\sRel}[4][1=, 2=, 3=, 4=]{\mthrel[#4]{s#3}[#1][#2]}
\newcommandx{\tRel}[4][1=, 2=, 3=, 4=]{\mthrel[#4]{t#3}[#1][#2]}
\newcommandx{\uRel}[4][1=, 2=, 3=, 4=]{\mthrel[#4]{u#3}[#1][#2]}
\newcommandx{\vRel}[4][1=, 2=, 3=, 4=]{\mthrel[#4]{v#3}[#1][#2]}
\newcommandx{\wRel}[4][1=, 2=, 3=, 4=]{\mthrel[#4]{w#3}[#1][#2]}
\newcommandx{\xRel}[4][1=, 2=, 3=, 4=]{\mthrel[#4]{x#3}[#1][#2]}
\newcommandx{\yRel}[4][1=, 2=, 3=, 4=]{\mthrel[#4]{y#3}[#1][#2]}
\newcommandx{\zRel}[4][1=, 2=, 3=, 4=]{\mthrel[#4]{z#3}[#1][#2]}
\newcommandx{\ASym}[4][1=, 2=, 3=, 4=]{\mthsym[#4]{A#3}[#1][#2]}
\newcommandx{\BSym}[4][1=, 2=, 3=, 4=]{\mthsym[#4]{B#3}[#1][#2]}
\newcommandx{\CSym}[4][1=, 2=, 3=, 4=]{\mthsym[#4]{C#3}[#1][#2]}
\newcommandx{\DSym}[4][1=, 2=, 3=, 4=]{\mthsym[#4]{D#3}[#1][#2]}
\newcommandx{\ESym}[4][1=, 2=, 3=, 4=]{\mthsym[#4]{E#3}[#1][#2]}
\newcommandx{\FSym}[4][1=, 2=, 3=, 4=]{\mthsym[#4]{F#3}[#1][#2]}
\newcommandx{\GSym}[4][1=, 2=, 3=, 4=]{\mthsym[#4]{G#3}[#1][#2]}
\newcommandx{\HSym}[4][1=, 2=, 3=, 4=]{\mthsym[#4]{H#3}[#1][#2]}
\newcommandx{\ISym}[4][1=, 2=, 3=, 4=]{\mthsym[#4]{I#3}[#1][#2]}
\newcommandx{\JSym}[4][1=, 2=, 3=, 4=]{\mthsym[#4]{J#3}[#1][#2]}
\newcommandx{\KSym}[4][1=, 2=, 3=, 4=]{\mthsym[#4]{K#3}[#1][#2]}
\newcommandx{\LSym}[4][1=, 2=, 3=, 4=]{\mthsym[#4]{L#3}[#1][#2]}
\newcommandx{\MSym}[4][1=, 2=, 3=, 4=]{\mthsym[#4]{M#3}[#1][#2]}
\newcommandx{\NSym}[4][1=, 2=, 3=, 4=]{\mthsym[#4]{N#3}[#1][#2]}
\newcommandx{\OSym}[4][1=, 2=, 3=, 4=]{\mthsym[#4]{O#3}[#1][#2]}
\newcommandx{\PSym}[4][1=, 2=, 3=, 4=]{\mthsym[#4]{P#3}[#1][#2]}
\newcommandx{\QSym}[4][1=, 2=, 3=, 4=]{\mthsym[#4]{Q#3}[#1][#2]}
\newcommandx{\RSym}[4][1=, 2=, 3=, 4=]{\mthsym[#4]{R#3}[#1][#2]}
\newcommandx{\SSym}[4][1=, 2=, 3=, 4=]{\mthsym[#4]{S#3}[#1][#2]}
\newcommandx{\TSym}[4][1=, 2=, 3=, 4=]{\mthsym[#4]{T#3}[#1][#2]}
\newcommandx{\USym}[4][1=, 2=, 3=, 4=]{\mthsym[#4]{U#3}[#1][#2]}
\newcommandx{\VSym}[4][1=, 2=, 3=, 4=]{\mthsym[#4]{V#3}[#1][#2]}
\newcommandx{\WSym}[4][1=, 2=, 3=, 4=]{\mthsym[#4]{W#3}[#1][#2]}
\newcommandx{\XSym}[4][1=, 2=, 3=, 4=]{\mthsym[#4]{X#3}[#1][#2]}
\newcommandx{\YSym}[4][1=, 2=, 3=, 4=]{\mthsym[#4]{Y#3}[#1][#2]}
\newcommandx{\ZSym}[4][1=, 2=, 3=, 4=]{\mthsym[#4]{Z#3}[#1][#2]}
\newcommandx{\aSym}[4][1=, 2=, 3=, 4=]{\mthsym[#4]{a#3}[#1][#2]}
\newcommandx{\bSym}[4][1=, 2=, 3=, 4=]{\mthsym[#4]{b#3}[#1][#2]}
\newcommandx{\cSym}[4][1=, 2=, 3=, 4=]{\mthsym[#4]{c#3}[#1][#2]}
\newcommandx{\dSym}[4][1=, 2=, 3=, 4=]{\mthsym[#4]{d#3}[#1][#2]}
\newcommandx{\eSym}[4][1=, 2=, 3=, 4=]{\mthsym[#4]{e#3}[#1][#2]}
\newcommandx{\fSym}[4][1=, 2=, 3=, 4=]{\mthsym[#4]{f#3}[#1][#2]}
\newcommandx{\gSym}[4][1=, 2=, 3=, 4=]{\mthsym[#4]{g#3}[#1][#2]}
\newcommandx{\hSym}[4][1=, 2=, 3=, 4=]{\mthsym[#4]{h#3}[#1][#2]}
\newcommandx{\iSym}[4][1=, 2=, 3=, 4=]{\mthsym[#4]{i#3}[#1][#2]}
\newcommandx{\jSym}[4][1=, 2=, 3=, 4=]{\mthsym[#4]{j#3}[#1][#2]}
\newcommandx{\kSym}[4][1=, 2=, 3=, 4=]{\mthsym[#4]{k#3}[#1][#2]}
\newcommandx{\lSym}[4][1=, 2=, 3=, 4=]{\mthsym[#4]{l#3}[#1][#2]}
\newcommandx{\mSym}[4][1=, 2=, 3=, 4=]{\mthsym[#4]{m#3}[#1][#2]}
\newcommandx{\nSym}[4][1=, 2=, 3=, 4=]{\mthsym[#4]{n#3}[#1][#2]}
\newcommandx{\oSym}[4][1=, 2=, 3=, 4=]{\mthsym[#4]{o#3}[#1][#2]}
\newcommandx{\pSym}[4][1=, 2=, 3=, 4=]{\mthsym[#4]{p#3}[#1][#2]}
\newcommandx{\qSym}[4][1=, 2=, 3=, 4=]{\mthsym[#4]{q#3}[#1][#2]}
\newcommandx{\rSym}[4][1=, 2=, 3=, 4=]{\mthsym[#4]{r#3}[#1][#2]}
\newcommandx{\sSym}[4][1=, 2=, 3=, 4=]{\mthsym[#4]{s#3}[#1][#2]}
\newcommandx{\tSym}[4][1=, 2=, 3=, 4=]{\mthsym[#4]{t#3}[#1][#2]}
\newcommandx{\uSym}[4][1=, 2=, 3=, 4=]{\mthsym[#4]{u#3}[#1][#2]}
\newcommandx{\vSym}[4][1=, 2=, 3=, 4=]{\mthsym[#4]{v#3}[#1][#2]}
\newcommandx{\wSym}[4][1=, 2=, 3=, 4=]{\mthsym[#4]{w#3}[#1][#2]}
\newcommandx{\xSym}[4][1=, 2=, 3=, 4=]{\mthsym[#4]{x#3}[#1][#2]}
\newcommandx{\ySym}[4][1=, 2=, 3=, 4=]{\mthsym[#4]{y#3}[#1][#2]}
\newcommandx{\zSym}[4][1=, 2=, 3=, 4=]{\mthsym[#4]{z#3}[#1][#2]}
\newcommandx{\AElm}[4][1=, 2=, 3=, 4=]{\mthelm[#4]{A#3}[#1][#2]}
\newcommandx{\BElm}[4][1=, 2=, 3=, 4=]{\mthelm[#4]{B#3}[#1][#2]}
\newcommandx{\CElm}[4][1=, 2=, 3=, 4=]{\mthelm[#4]{C#3}[#1][#2]}
\newcommandx{\DElm}[4][1=, 2=, 3=, 4=]{\mthelm[#4]{D#3}[#1][#2]}
\newcommandx{\EElm}[4][1=, 2=, 3=, 4=]{\mthelm[#4]{E#3}[#1][#2]}
\newcommandx{\FElm}[4][1=, 2=, 3=, 4=]{\mthelm[#4]{F#3}[#1][#2]}
\newcommandx{\GElm}[4][1=, 2=, 3=, 4=]{\mthelm[#4]{G#3}[#1][#2]}
\newcommandx{\HElm}[4][1=, 2=, 3=, 4=]{\mthelm[#4]{H#3}[#1][#2]}
\newcommandx{\IElm}[4][1=, 2=, 3=, 4=]{\mthelm[#4]{I#3}[#1][#2]}
\newcommandx{\JElm}[4][1=, 2=, 3=, 4=]{\mthelm[#4]{J#3}[#1][#2]}
\newcommandx{\KElm}[4][1=, 2=, 3=, 4=]{\mthelm[#4]{K#3}[#1][#2]}
\newcommandx{\LElm}[4][1=, 2=, 3=, 4=]{\mthelm[#4]{L#3}[#1][#2]}
\newcommandx{\MElm}[4][1=, 2=, 3=, 4=]{\mthelm[#4]{M#3}[#1][#2]}
\newcommandx{\NElm}[4][1=, 2=, 3=, 4=]{\mthelm[#4]{N#3}[#1][#2]}
\newcommandx{\OElm}[4][1=, 2=, 3=, 4=]{\mthelm[#4]{O#3}[#1][#2]}
\newcommandx{\PElm}[4][1=, 2=, 3=, 4=]{\mthelm[#4]{P#3}[#1][#2]}
\newcommandx{\QElm}[4][1=, 2=, 3=, 4=]{\mthelm[#4]{Q#3}[#1][#2]}
\newcommandx{\RElm}[4][1=, 2=, 3=, 4=]{\mthelm[#4]{R#3}[#1][#2]}
\newcommandx{\SElm}[4][1=, 2=, 3=, 4=]{\mthelm[#4]{S#3}[#1][#2]}
\newcommandx{\TElm}[4][1=, 2=, 3=, 4=]{\mthelm[#4]{T#3}[#1][#2]}
\newcommandx{\UElm}[4][1=, 2=, 3=, 4=]{\mthelm[#4]{U#3}[#1][#2]}
\newcommandx{\VElm}[4][1=, 2=, 3=, 4=]{\mthelm[#4]{V#3}[#1][#2]}
\newcommandx{\WElm}[4][1=, 2=, 3=, 4=]{\mthelm[#4]{W#3}[#1][#2]}
\newcommandx{\XElm}[4][1=, 2=, 3=, 4=]{\mthelm[#4]{X#3}[#1][#2]}
\newcommandx{\YElm}[4][1=, 2=, 3=, 4=]{\mthelm[#4]{Y#3}[#1][#2]}
\newcommandx{\ZElm}[4][1=, 2=, 3=, 4=]{\mthelm[#4]{Z#3}[#1][#2]}
\newcommandx{\aElm}[4][1=, 2=, 3=, 4=]{\mthelm[#4]{a#3}[#1][#2]}
\newcommandx{\bElm}[4][1=, 2=, 3=, 4=]{\mthelm[#4]{b#3}[#1][#2]}
\newcommandx{\cElm}[4][1=, 2=, 3=, 4=]{\mthelm[#4]{c#3}[#1][#2]}
\newcommandx{\dElm}[4][1=, 2=, 3=, 4=]{\mthelm[#4]{d#3}[#1][#2]}
\newcommandx{\eElm}[4][1=, 2=, 3=, 4=]{\mthelm[#4]{e#3}[#1][#2]}
\newcommandx{\fElm}[4][1=, 2=, 3=, 4=]{\mthelm[#4]{f#3}[#1][#2]}
\newcommandx{\gElm}[4][1=, 2=, 3=, 4=]{\mthelm[#4]{g#3}[#1][#2]}
\newcommandx{\hElm}[4][1=, 2=, 3=, 4=]{\mthelm[#4]{h#3}[#1][#2]}
\newcommandx{\iElm}[4][1=, 2=, 3=, 4=]{\mthelm[#4]{i#3}[#1][#2]}
\newcommandx{\jElm}[4][1=, 2=, 3=, 4=]{\mthelm[#4]{j#3}[#1][#2]}
\newcommandx{\kElm}[4][1=, 2=, 3=, 4=]{\mthelm[#4]{k#3}[#1][#2]}
\newcommandx{\lElm}[4][1=, 2=, 3=, 4=]{\mthelm[#4]{l#3}[#1][#2]}
\newcommandx{\mElm}[4][1=, 2=, 3=, 4=]{\mthelm[#4]{m#3}[#1][#2]}
\newcommandx{\nElm}[4][1=, 2=, 3=, 4=]{\mthelm[#4]{n#3}[#1][#2]}
\newcommandx{\oElm}[4][1=, 2=, 3=, 4=]{\mthelm[#4]{o#3}[#1][#2]}
\newcommandx{\pElm}[4][1=, 2=, 3=, 4=]{\mthelm[#4]{p#3}[#1][#2]}
\newcommandx{\qElm}[4][1=, 2=, 3=, 4=]{\mthelm[#4]{q#3}[#1][#2]}
\newcommandx{\rElm}[4][1=, 2=, 3=, 4=]{\mthelm[#4]{r#3}[#1][#2]}
\newcommandx{\sElm}[4][1=, 2=, 3=, 4=]{\mthelm[#4]{s#3}[#1][#2]}
\newcommandx{\tElm}[4][1=, 2=, 3=, 4=]{\mthelm[#4]{t#3}[#1][#2]}
\newcommandx{\uElm}[4][1=, 2=, 3=, 4=]{\mthelm[#4]{u#3}[#1][#2]}
\newcommandx{\vElm}[4][1=, 2=, 3=, 4=]{\mthelm[#4]{v#3}[#1][#2]}
\newcommandx{\wElm}[4][1=, 2=, 3=, 4=]{\mthelm[#4]{w#3}[#1][#2]}
\newcommandx{\xElm}[4][1=, 2=, 3=, 4=]{\mthelm[#4]{x#3}[#1][#2]}
\newcommandx{\yElm}[4][1=, 2=, 3=, 4=]{\mthelm[#4]{y#3}[#1][#2]}
\newcommandx{\zElm}[4][1=, 2=, 3=, 4=]{\mthelm[#4]{z#3}[#1][#2]}
\newcommand{\ie}
	{\txtabr{i.e.}\xspace}
\newcommand{\wrt}
	{\txtabr{w.r.t.}\xspace}
\newcommand{\resp}
	{\txtabr{resp.,}\xspace}
\newcommand{\fst}
	{\mthargfun{fst}}
\newcommand{\lst}
	{\mthargfun{lst}}
\newcommand{\tuple}[1]
	{\ensuremath{\argint{\langle}{#1}{\rangle}}}
\newcommand{\tupleb}[2]
	{\tuple{\argb{#1}{#2}}}
\newcommand{\tuplec}[3]
	{\tuple{\argc{#1}{#2}{#3}}}
\newcommand{\tupled}[4]
	{\tuple{\argd{#1}{#2}{#3}{#4}}}
\newcommand{\tuplee}[5]
	{\tuple{\arge{#1}{#2}{#3}{#4}{#5}}}
\newcommand{\tuplef}[6]
	{\tuple{\argf{#1}{#2}{#3}{#4}{#5}{#6}}}
\newcommand{\tupleg}[7]
	{\tuple{\argg{#1}{#2}{#3}{#4}{#5}{#6}{#7}}}
\newcommand{\tupleh}[8]
	{\tuple{\argh{#1}{#2}{#3}{#4}{#5}{#6}{#7}{#8}}}
\newcommand{\tuplei}[9]
	{\tuple{\argi{#1}{#2}{#3}{#4}{#5}{#6}{#7}{#8}{#9}}}
\newcommand{\tuplecx}[3]
	{%
	\def\defarga{#1}%
	\def\defargb{#2}%
	\def\defargc{#3}%
	\argsubsup{\tupleauxcx}%
	}
\newcommand{\tupledx}[4]
	{%
	\def\defarga{#1}%
	\def\defargb{#2}%
	\def\defargc{#3}%
	\def\defargd{#4}%
	\argsubsup{\tupleauxdx}%
	}
\newcommand{\tupleex}[5]
	{%
	\def\defarga{#1}%
	\def\defargb{#2}%
	\def\defargc{#3}%
	\def\defargd{#4}%
	\def\defarge{#5}%
	\argsubsup{\tupleauxex}%
	}
\newcommand{\tuplefx}[6]
	{%
	\def\defarga{#1}%
	\def\defargb{#2}%
	\def\defargc{#3}%
	\def\defargd{#4}%
	\def\defarge{#5}%
	\def\defargf{#6}%
	\argsubsup{\tupleauxfx}%
	}
\newcommand{\tuplegx}[7]
	{%
	\def\defarga{#1}%
	\def\defargb{#2}%
	\def\defargc{#3}%
	\def\defargd{#4}%
	\def\defarge{#5}%
	\def\defargf{#6}%
	\def\defargg{#7}%
	\argsubsup{\tupleauxgx}%
	}
\newcommandx{\tupleauxbx}[2][1=, 2=]
	{%
	\tupleb
		{\argdef{#1}{\defarga[\argsubscript][\argsuperscript]}}
		{\argdef{#2}{\defargb[\argsubscript][\argsuperscript]}}%
	}
\newcommandx{\tupleauxcx}[3][1=, 2=, 3=]
	{%
	\tuplec
		{\argdef{#1}{\defarga[\argsubscript][\argsuperscript]}}
		{\argdef{#2}{\defargb[\argsubscript][\argsuperscript]}}
		{\argdef{#3}{\defargc[\argsubscript][\argsuperscript]}}%
	}
\newcommandx{\tupleauxdx}[4][1=, 2=, 3=, 4=]
	{%
	\tupled
		{\argdef{#1}{\defarga[\argsubscript][\argsuperscript]}}
		{\argdef{#2}{\defargb[\argsubscript][\argsuperscript]}}
		{\argdef{#3}{\defargc[\argsubscript][\argsuperscript]}}
		{\argdef{#4}{\defargd[\argsubscript][\argsuperscript]}}%
	}
\newcommandx{\tupleauxex}[5][1=, 2=, 3=, 4=, 5=]
	{%
	\tuplee
		{\argdef{#1}{\defarga[\argsubscript][\argsuperscript]}}
		{\argdef{#2}{\defargb[\argsubscript][\argsuperscript]}}
		{\argdef{#3}{\defargc[\argsubscript][\argsuperscript]}}
		{\argdef{#4}{\defargd[\argsubscript][\argsuperscript]}}
		{\argdef{#5}{\defarge[\argsubscript][\argsuperscript]}}%
	}
\newcommandx{\tupleauxfx}[6][1=, 2=, 3=, 4=, 5=, 6=]
	{%
	\tuplef
		{\argdef{#1}{\defarga[\argsubscript][\argsuperscript]}}
		{\argdef{#2}{\defargb[\argsubscript][\argsuperscript]}}
		{\argdef{#3}{\defargc[\argsubscript][\argsuperscript]}}
		{\argdef{#4}{\defargd[\argsubscript][\argsuperscript]}}
		{\argdef{#5}{\defarge[\argsubscript][\argsuperscript]}}
		{\argdef{#6}{\defargf[\argsubscript][\argsuperscript]}}%
	}
\newcommandx{\tupleauxgx}[7][1=, 2=, 3=, 4=, 5=, 6=, 7=]
	{%
	\tupleg
		{\argdef{#1}{\defarga[\argsubscript][\argsuperscript]}}
		{\argdef{#2}{\defargb[\argsubscript][\argsuperscript]}}
		{\argdef{#3}{\defargc[\argsubscript][\argsuperscript]}}
		{\argdef{#4}{\defargd[\argsubscript][\argsuperscript]}}
		{\argdef{#5}{\defarge[\argsubscript][\argsuperscript]}}
		{\argdef{#6}{\defargf[\argsubscript][\argsuperscript]}}
		{\argdef{#7}{\defargg[\argsubscript][\argsuperscript]}}%
	}
\newcommandx{\tupleauxhx}[8][1=, 2=, 3=, 4=, 5=, 6=, 7=, 8=]
	{%
	\tupleh
		{\argdef{#1}{\defarga[\argsubscript][\argsuperscript]}}
		{\argdef{#2}{\defargb[\argsubscript][\argsuperscript]}}
		{\argdef{#3}{\defargc[\argsubscript][\argsuperscript]}}
		{\argdef{#4}{\defargd[\argsubscript][\argsuperscript]}}
		{\argdef{#5}{\defarge[\argsubscript][\argsuperscript]}}
		{\argdef{#6}{\defargf[\argsubscript][\argsuperscript]}}
		{\argdef{#7}{\defargg[\argsubscript][\argsuperscript]}}
		{\argdef{#8}{\defargh[\argsubscript][\argsuperscript]}}%
	}
\newcommandx{\tupleauxix}[9][1=, 2=, 3=, 4=, 5=, 6=, 7=, 8=, 9=]
	{%
	\tuplei
		{\argdef{#1}{\defarga[\argsubscript][\argsuperscript]}}
		{\argdef{#2}{\defargb[\argsubscript][\argsuperscript]}}
		{\argdef{#3}{\defargc[\argsubscript][\argsuperscript]}}
		{\argdef{#4}{\defargd[\argsubscript][\argsuperscript]}}
		{\argdef{#5}{\defarge[\argsubscript][\argsuperscript]}}
		{\argdef{#6}{\defargf[\argsubscript][\argsuperscript]}}
		{\argdef{#7}{\defargg[\argsubscript][\argsuperscript]}}
		{\argdef{#8}{\defargh[\argsubscript][\argsuperscript]}}
		{\argdef{#9}{\defargi[\argsubscript][\argsuperscript]}}%
	}
\newcommand{\set}[2]
	{\ensuremath{\argint{\{}{\argext{#1}{\allowbreak:\allowbreak}{#2}}{\}}}}
\newcommandx{\pto}[2][1=, 2=]
	{\ensuremath{\rightharpoonup}}
\newcommandx{\cto}[2][1=, 2=]
	{\:\mthempty{\to}[#1][#2]\:}
\newcommandx{\cpto}[2][1=, 2=]
	{\:\mthempty{\pto}[#1][#2]\:}
\newcommand{\SetN}
	{\mthset[2]{N}}
\newcommand{\numcc}[2]
	{\mthempty{[\argb{#1}{#2}]}}
\newcounter{explevautnot}\setcounter{explevautnot}{0}
\providecommand{\autknd}{APT}
\providecommandx{\NFA}[5][1=, 2=, 3=, 4=, 5=]
	{\txtargname{Nfa#5{\small\argint{$[$}{#1}{$]$}}}[#2][#3]{#4}\xspace}
\providecommandx{\APT}[5][1=, 2=, 3=, 4=, 5=]
	{\txtargname{Apt#5{\small\argint{$[$}{#1}{$]$}}}[#2][#3]{#4}\xspace}
\providecommandx{\AlphSet}[3][1=, 2=, 3=]
	{\mthset{\Sigma#3}[#1][#2]}
\providecommandx{\alphSym}[3][1=, 2=, 3=]
	{\mthsym{\sigma#3}[#1][#2]}
\providecommandx{\alphElm}[3][1=, 2=, 3=]
	{\mthelm{\sigma#3}[#1][#2]}
\providecommandx{\AutStSet}[3][1=, 2=, 3=]
	{\mthset{Q#3}[#1][#2]}
\providecommandx{\autstSym}[3][1=, 2=, 3=]
	{\mthsym{q#3}[#1][#2]}
\providecommandx{\autstElm}[3][1=, 2=, 3=]
	{\mthelm{q#3}[#1][#2]}
\providecommandx{\trFun}[3][1=, 2=, 3=]
	{\mthfun{\delta#3}[#1][#2]}
\providecommandx{\BoolCom}{\BName[][+]}
\providecommand{\accCon}[1][]
	{%
	\IfStrEqCase{\argdef{#1}{\autknd}}
		{%
		{NFA}{\mthset{F}}%
		{APT}{\mthset{\beta}}%
		{NBA}{\mthset{\alpha}}%
		}
		[\ensuremath{\clubsuit}]%
	}
\providecommandx{\runElm}[4][1=, 2=, 3=, 4=]
	{\mthsym{\chi#4}[#1][#2]{#3}}
\providecommandx{\runSym}[4][1=, 2=, 3=, 4=]
	{\mthsym{\chi#4}[#1][#2]{#3}}
\tupleb{\TSet}{\rFun}}
\providecommandx{\trkElm}[4][1=, 2=, 3=, 4=]
	{\mthelm{\eta#4}[#1][#2]{#3}}
\providecommandx{\trkSym}[4][1=, 2=, 3=, 4=]
	{\mthsym{\eta#4}[#1][#2]{#3}}
\providecommandx{\PthSet}[3][1=, 2=, 3=]
	{\mthset{Pth#3}[#1][#2]}
\providecommandx{\pthSym}[3][1=, 2=, 3=]
	{\mthsym{\varpi#3}[#1][#2]}
\providecommandx{\pthElm}[3][1=, 2=, 3=]
	{\mthelm{\varpi#3}[#1][#2]}
\providecommandx{\AutCls}[5][1=, 2=, 3=, 4=, 5=]
	{\mthset[#5]{Aut#4\text{\small\txtname{\argint{$[$}{#1}{$]$}}}}[#2][#3]}
\providecommand{\AutStr}[1][]
	{%
	\IfStrEqCase{\argdef{#1}{\autknd}}
		{%
		{NFA}{\tupled{\AlphSet}{\AutStSet}{\trFun}{\accCon[NFA]}}%
		{APT}{\tupled{\AlphSet}{\AutStSet}{\trFun}{\accCon[APT]}}%
		{NBA}{\tupled{\AlphSet}{\AutStSet}{\trFun}{\accCon[NBA]}}%
		}
		[\ensuremath{\clubsuit}]%
	}
\newcounter{explevgamnot}\setcounter{explevgamnot}{0}
\providecommand{\gamknd}{PG}
\providecommand{\frsentnam}{0}
\providecommand{\scnentnam}{1}
\providecommand{\finentnam}{\mthsym{F}}
\providecommandx{\PG}[5][1=, 2=, 3=, 4=, 5=]
	{\txtargname{Pg#5{\small\argint{$[$}{#1}{$]$}}}[#2][#3]{#4}\xspace}
\providecommandx{\EPG}[5][1=, 2=, 3=, 4=, 5=]
	{\txtargname{Epg#5{\small\argint{$[$}{#1}{$]$}}}[#2][#3]{#4}\xspace}
\providecommandx{\PlSet}[3][1=, 2=, 3=]
	{\mthset{Pl#3}[#1][#2]}
\providecommandx{\plSym}[3][1=, 2=, 3=]
	{\mthsym{p#3}[#1][#2]}
\providecommandx{\plElm}[3][1=, 2=, 3=]
	{\mthelm{p#3}[#1][#2]}
\providecommandx{\AgSet}[3][1=, 2=, 3=]
	{\mthset{Ag#3}[#1][#2]}
\providecommandx{\agSym}[3][1=, 2=, 3=]
	{\mthsym{a#3}[#1][#2]}
\providecommandx{\agElm}[3][1=, 2=, 3=]
	{\mthelm{a#3}[#1][#2]}
\providecommandx{\AcSet}[3][1=, 2=, 3=]
	{\mthset{Ac#3}[#1][#2]}
\providecommandx{\acSym}[3][1=, 2=, 3=]
	{\mthsym{c#3}[#1][#2]}
\providecommandx{\acElm}[3][1=, 2=, 3=]
	{\mthelm{c#3}[#1][#2]}
\providecommandx{\DcSet}[3][1=, 2=, 3=]
	{\mthset{Dc#3}[#1][#2]}
\providecommandx{\dcSym}[4][1=, 2=, 3=, 4=]
	{\mthargfun{\delta#4}[#1][#2]{#3}}
\providecommandx{\dcElm}[4][1=, 2=, 3=, 4=]
	{\mthargfun{\delta#4}[#1][#2]{#3}}
\providecommandx{\PsSet}[3][1=, 2=, 3=]
	{\mthset{Ps#3}[#1][#2]}
\providecommand{\FPsSet}[1][]
	{\PsSet[\frsentnam#1]}
\providecommand{\SPsSet}[1][]
	{\PsSet[\scnentnam#1]}
\providecommand{\FinPsSet}[1][]
	{\PsSet[\finentnam#1]}
\providecommandx{\psSym}[3][1=, 2=, 3=]
	{\mthsym{q#3}[#1][#2]}
\providecommandx{\psElm}[3][1=, 2=, 3=]
	{\mthelm{q#3}[#1][#2]}
\providecommandx{\StSet}[3][1=, 2=, 3=]
	{\mthset{St#3}[#1][#2]}
\providecommand{\FStSet}[1][]
	{\StSet[\frsentnam#1]}
\providecommand{\SStSet}[1][]
	{\StSet[\scnentnam#1]}
\providecommandx{\stSym}[3][1=, 2=, 3=]
	{\mthsym{s#3}[#1][#2]}
\providecommandx{\stElm}[3][1=, 2=, 3=]
	{\mthelm{s#3}[#1][#2]}
\providecommandx{\plFun}[4][1=, 2=, 3=, 4=]
	{\mthargfun{pl#4}[#1][#2]{#3}}
\providecommandx{\agFun}[4][1=, 2=, 3=, 4=]
	{\mthargfun{ag#4}[#1][#2]{#3}}
\providecommandx{\acFun}[4][1=, 2=, 3=, 4=]
	{\mthargfun{ac#4}[#1][#2]{#3}}
\providecommandx{\dcFun}[4][1=, 2=, 3=, 4=]
	{\mthargfun{dc#4}[#1][#2]{#3}}
\providecommandx{\mvFun}[4][1=, 2=, 3=, 4=]
	{\mthargfun{mv#4}[#1][#2]{#3}}
\providecommandx{\trFun}[4][1=, 2=, 3=, 4=]
	{\mthargfun{tr#4}[#1][#2]{#3}}
\providecommandx{\MvRel}[3][1=, 2=, 3=]
	{\mthrel{Mv#3}[#1][#2]}
\providecommandx{\mvSym}[3][1=, 2=, 3=]
	{\mthsym{m#3}[#1][#2]}
\providecommandx{\mvElm}[3][1=, 2=, 3=]
	{\mthelm{m#3}[#1][#2]}
\providecommand{\ArName}
	{\mthname{A}}
\providecommandx{\ArCls}[5][1=, 2=, 3=, 4=, 5=]
	{\mthset[#5]{Ar#4\text{\small\txtname{\argint{$[$}{#1}{$]$}}}}[#2][#3]}
\providecommand{\ArStr}[1][]
	{%
	\IfStrEqCase{\argdef{#1}{\gamknd}}
		{%
		{2PT}{\tupledx{\FPsSet}{\SPsSet}{\FinPsSet}{\MvRel}}%
		{2AT}{\tuplecx{\FStSet}{\SStSet}{\MvRel}}%
		{PG}{\tupledx{\FPsSet}{\SPsSet}{\MvRel}{\pFun}}%
		{EPG}{\tuplef{\FPsSet}{\SPsSet}{\VSet}{\ASet}{\MvRel}{\pFun}}%
		{MPC0}{\tupledx{\PlSet}{\AcSet}{\PsSet}{\trFun}}%
		{MAC0}{\tupledx{\AgSet}{\AcSet}{\StSet}{\trFun}}%
		{MPC1}{\tupleex{\PlSet}{\AcSet}{\PsSet}{\dcFun}{\trFun}}%
		{MAC1}{\tupleex{\AgSet}{\AcSet}{\StSet}{\dcFun}{\trFun}}%
		{MPC2}{\tuplefx{\PlSet}{\AcSet}{\PsSet}{\plFun}{\acFun}{\trFun}}%
		{MAC2}{\tuplefx{\AgSet}{\AcSet}{\StSet}{\agFun}{\acFun}{\trFun}}%
		{MPC3}{\tuplegx{\PlSet}{\AcSet}{\PsSet}{\plFun}{\acFun}{\dcFun}{\trFun}}%
		{MAC3}{\tuplegx{\AgSet}{\AcSet}{\StSet}{\agFun}{\acFun}{\dcFun}{\trFun}}%
		}
		[\ensuremath{\clubsuit}]%
	}
\providecommandx{\orFun}[3][1=, 2=, 3=]
	{\mthempty{\argint{\left\vert}{#3}{\right\vert}}[#1][#2]}
\providecommandx{\szFun}[3][1=, 2=, 3=]
	{\mthempty{\argint{\left\Vert}{#3}{\right\Vert}}[#1][#2]}
\providecommandx{\HstSet}[3][1=, 2=, 3=]
	{\mthset{Hst#3}[#1][#2]}
\providecommand{\FHstSet}[1][]
	{\HstSet[\frsentnam#1]}
\providecommand{\SHstSet}[1][]
	{\HstSet[\scnentnam#1]}
\providecommandx{\hstSym}[3][1=, 2=, 3=]
	{\mthsym{\rho#3}[#1][#2]}
\providecommandx{\hstElm}[3][1=, 2=, 3=]
	{\mthelm{\rho#3}[#1][#2]}
\providecommandx{\StrSet}[3][1=, 2=, 3=]
	{\mthset{Str#3}[#1][#2]}
\providecommandx{\strSym}[4][1=, 2=, 3=, 4=]
	{\mthargfun{str#4}[#1][#2]{#3}}
\providecommandx{\strElm}[4][1=, 2=, 3=, 4=]
	{\mthargfun{str#4}[#1][#2]{#3}}
\providecommand{\fstrElm}[1][]
	{\strElm[\frsentnam#1]}
\providecommand{\sstrElm}[1][]
	{\strElm[\scnentnam#1]}
\providecommandx{\PrfSet}[3][1=, 2=, 3=]
	{\mthset{Prf#3}[#1][#2]}
\providecommandx{\prfSym}[4][1=, 2=, 3=, 4=]
	{\mthargfun{\xi#4}[#1][#2]{#3}}
\providecommandx{\prfElm}[4][1=, 2=, 3=, 4=]
	{\mthargfun{\xi#4}[#1][#2]{#3}}
\providecommandx{\playFun}[4][1=, 2=, 3=, 4=]
	{\mthargfun{play#4}[#1][#2]{#3}}
\providecommandx{\playElm}[4][1=, 2=, 3=, 4=]
	{\mthelm{\pi#4}[#1][#2]{#3}}
\providecommandx{\playSym}[4][1=, 2=, 3=, 4=]
	{\mthsym{\pi#4}[#1][#2]{#3}}
\providecommandx{\playSet}[4][1=, 2=, 3=, 4=]
	{\mthset{Ply#4}[#1][#2]{#3}}
\providecommandx{\PfSet}[3][1=, 2=, 3=]
	{\mthset{Pf#3}[#1][#2]}
\providecommandx{\pfFun}[4][1=, 2=, 3=, 4=]
	{\mthargfun{pf#4}[#1][#2]{#3}}
\providecommandx{\PfArName}[2][1=, 2=]
	{\widehat{\ArName[#1][#2]}}
\providecommandx{\PfArCls}[5][1=, 2=, 3=, 4=, 5=]
	{\widehat{\ArCls[#1][#2][#3][#4][#5]}}
\providecommandx{\ClSet}[3][1=, 2=, 3=]
	{\mthset{Cl#3}[#1][#2]}
\providecommandx{\clFun}[4][1=, 2=, 3=, 4=]
	{\mthargfun{cl#4}[#1][#2]{#3}}
\providecommandx{\ClArName}[2][1=, 2=]
	{\widetilde{\ArName[#1][#2]}}
\providecommandx{\ClArCls}[5][1=, 2=, 3=, 4=, 5=]
	{\widetilde{\ArCls[#1][#2][#3][#4][#5]}}
\providecommandx{\WgSet}[3][1=, 2=, 3=]
	{\mthset{Wg#3}[#1][#2]}
\providecommandx{\wgFun}[4][1=, 2=, 3=, 4=]
	{\mthargfun{wg#4}[#1][#2]{#3}}
\providecommandx{\WgArName}[2][1=, 2=]
	{\widetilde{\ArName[#1][#2]}}
\providecommandx{\WgArCls}[5][1=, 2=, 3=, 4=, 5=]
	{\widetilde{\ArCls[#1][#2][#3][#4][#5]}}
\providecommandx{\ClWgArName}[2][1=, 2=]
	{\overline{\ArName[#1][#2]}}
\providecommandx{\ClWgArCls}[5][1=, 2=, 3=, 4=, 5=]
	{\overline{\ArCls[#1][#2][#3][#4][#5]}}
\providecommandx{\WnSet}[3][1=, 2=, 3=]
	{\mthset{Wn#3}[#1][#2]}
\providecommandx{\wnSym}[3][1=, 2=, 3=]
	{\mthsym{w#3}[#1][#2]}
\providecommandx{\wnElm}[3][1=, 2=, 3=]
	{\mthelm{w#3}[#1][#2]}
\providecommand{\GmName}
	{\mthname{G}}
\providecommandx{\GmCls}[5][1=, 2=, 3=, 4=, 5=]
	{\mthset[#5]{Gm#4\text{\small\txtname{\argint{$[$}{#1}{$]$}}}}[#2][#3]}
\newcommandx{\GmStr}[2][1=, 2=]
	{%
	\StrLeft{\argdef{#1}{\gamknd}}{2}[\optgmknd]%
	\def\defpselm{#2}%
	\argsubsup{\GmStrAux}%
	}
\newcommandx{\GmStrAux}[2][1=, 2=]
	{%
	\IfStrEqCase{\optgmknd}
		{%
		{2P}
			{%
			\tuplec
				{\argdef{#1}{\ArName[\argsubscript][\argsuperscript]}}
				{\argdef{#2}{\WnSet[\argsubscript][\argsuperscript]}}
				{\argdef{\defpselm}{\psElm[0\argsubscript][\argsuperscript]}}%
			}%
		{2A}
			{%
			\tuplec
				{\argdef{#1}{\PfArName[\argsubscript][\argsuperscript]}}
				{\argdef{#2}{\WnSet[\argsubscript][\argsuperscript]}}
				{\argdef{\defpselm}{\stElm[0\argsubscript][\argsuperscript]}}%
			}%
		{MP}
			{%
			\tuplec
				{\argdef{#1}{\PfArName[\argsubscript][\argsuperscript]}}
				{\argdef{#2}{...}}
				{\argdef{\defpselm}{\psElm[0\argsubscript][\argsuperscript]}}%
			}%
		{MA}
			{%
			\tuplec
				{\argdef{#1}{\PfArName[\argsubscript][\argsuperscript]}}
				{\argdef{#2}{...}}
				{\argdef{\defpselm}{\stElm[0\argsubscript][\argsuperscript]}}%
			}%
		}
		[\ensuremath{\clubsuit}]%
	}
\newcommandx{\CTL}[5][1=, 2=, 3=, 4=, 5=]
	{\txtargname{CTL#5{\small\argint{$[$}{#1}{$]$}}}[#2][#3]{#4}\xspace}
\newcommandx{\CTLP}[5][1=, 2=, 3=, 4=, 5=]
	{\txtargname{CTL$^{+}$#5{\small\argint{$[$}{#1}{$]$}}}[#2][#3]{#4}\xspace}
\newcommandx{\CTLS}[5][1=, 2=, 3=, 4=, 5=]
	{\txtargname{CTL$^{\star}$#5{\small\argint{$[$}{#1}{$]$}}}[#2][#3]{#4}\xspace}
\newcommandx{\LogTime}[4][1=, 2=, 3=, 4=]
	{\txtargname{LogTime#4}[#2][#3]{#1}\xspace}
\newcommandx{\LogTimeH}[4][1=, 2=, 3=, 4=]
	{\LogTime[#1][#2][#3][#4]-\HComp}
\newcommandx{\LogTimeE}[4][1=, 2=, 3=, 4=]
	{\LogTime[#1][#2][#3][#4]-\EComp}
\newcommandx{\LogTimeC}[4][1=, 2=, 3=, 4=]
	{\LogTime[#1][#2][#3][#4]-\CComp}
\newcommand{\NLogTime}
	{\txtname{N}\LogTime}
\newcommandx{\NLogTimeH}[4][1=, 2=, 3=, 4=]
	{\NLogTime[#1][#2][#3][#4]-\HComp}
\newcommandx{\NLogTimeE}[4][1=, 2=, 3=, 4=]
	{\NLogTime[#1][#2][#3][#4]-\EComp}
\newcommandx{\NLogTimeC}[4][1=, 2=, 3=, 4=]
	{\NLogTime[#1][#2][#3][#4]-\CComp}
\newcommand{\CoNLogTime}
	{\txtname{Co}\NLogTime}
\newcommandx{\CoNLogTimeH}[4][1=, 2=, 3=, 4=]
	{\CoNLogTime[#1][#2][#3][#4]-\HComp}
\newcommandx{\CoNLogTimeE}[4][1=, 2=, 3=, 4=]
	{\CoNLogTime[#1][#2][#3][#4]-\EComp}
\newcommandx{\CoNLogTimeC}[4][1=, 2=, 3=, 4=]
	{\CoNLogTime[#1][#2][#3][#4]-\CComp}
\newcommandx{\ALogTimeH}[4][1=, 2=, 3=, 4=]
	{\ALogTime[#1][#2][#3][#4]-\HComp}
\newcommandx{\ALogTimeE}[4][1=, 2=, 3=, 4=]
	{\ALogTime[#1][#2][#3][#4]-\EComp}
\newcommandx{\ALogTimeC}[4][1=, 2=, 3=, 4=]
	{\ALogTime[#1][#2][#3][#4]-\CComp}
\newcommandx{\LogSpace}[4][1=, 2=, 3=, 4=]
	{\txtargname{LogSpace#4}[#2][#3]{#1}\xspace}
\newcommandx{\LogSpaceH}[4][1=, 2=, 3=, 4=]
	{\LogSpace[#1][#2][#3][#4]-\HComp}
\newcommandx{\LogSpaceE}[4][1=, 2=, 3=, 4=]
	{\LogSpace[#1][#2][#3][#4]-\EComp}
\newcommandx{\LogSpaceC}[4][1=, 2=, 3=, 4=]
	{\LogSpace[#1][#2][#3][#4]-\CComp}
\newcommandx{\NLogSpaceH}[4][1=, 2=, 3=, 4=]
	{\NLogSpace[#1][#2][#3][#4]-\HComp}
\newcommandx{\NLogSpaceE}[4][1=, 2=, 3=, 4=]
	{\NLogSpace[#1][#2][#3][#4]-\EComp}
\newcommandx{\NLogSpaceC}[4][1=, 2=, 3=, 4=]
	{\NLogSpace[#1][#2][#3][#4]-\CComp}
\newcommandx{\CoNLogSpaceH}[4][1=, 2=, 3=, 4=]
	{\CoNLogSpace[#1][#2][#3][#4]-\HComp}
\newcommandx{\CoNLogSpaceE}[4][1=, 2=, 3=, 4=]
	{\CoNLogSpace[#1][#2][#3][#4]-\EComp}
\newcommandx{\CoNLogSpaceC}[4][1=, 2=, 3=, 4=]
	{\CoNLogSpace[#1][#2][#3][#4]-\CComp}
\newcommandx{\ALogSpaceH}[4][1=, 2=, 3=, 4=]
	{\ALogSpace[#1][#2][#3][#4]-\HComp}
\newcommandx{\ALogSpaceE}[4][1=, 2=, 3=, 4=]
	{\ALogSpace[#1][#2][#3][#4]-\EComp}
\newcommandx{\ALogSpaceC}[4][1=, 2=, 3=, 4=]
	{\ALogSpace[#1][#2][#3][#4]-\CComp}
\newcommandx{\PTime}[4][1=, 2=, 3=, 4=]
	{\txtargname{PTime#4}[#2][#3]{#1}\xspace}
\newcommandx{\PTimeH}[4][1=, 2=, 3=, 4=]
	{\PTime[#1][#2][#3][#4]-\HComp}
\newcommandx{\PTimeE}[4][1=, 2=, 3=, 4=]
	{\PTime[#1][#2][#3][#4]-\EComp}
\newcommandx{\PTimeC}[4][1=, 2=, 3=, 4=]
	{\PTime[#1][#2][#3][#4]-\CComp}
\newcommandx{\UPTimeH}[4][1=, 2=, 3=, 4=]
	{\UPTime[#1][#2][#3][#4]-\HComp}
\newcommandx{\UPTimeE}[4][1=, 2=, 3=, 4=]
	{\UPTime[#1][#2][#3][#4]-\EComp}
\newcommandx{\UPTimeC}[4][1=, 2=, 3=, 4=]
	{\UPTime[#1][#2][#3][#4]-\CComp}
\newcommandx{\CoUPTimeH}[4][1=, 2=, 3=, 4=]
	{\CoUPTime[#1][#2][#3][#4]-\HComp}
\newcommandx{\CoUPTimeE}[4][1=, 2=, 3=, 4=]
	{\CoUPTime[#1][#2][#3][#4]-\EComp}
\newcommandx{\CoUPTimeC}[4][1=, 2=, 3=, 4=]
	{\CoUPTime[#1][#2][#3][#4]-\CComp}
\newcommandx{\NPTimeH}[4][1=, 2=, 3=, 4=]
	{\NPTime[#1][#2][#3][#4]-\HComp}
\newcommandx{\NPTimeE}[4][1=, 2=, 3=, 4=]
	{\NPTime[#1][#2][#3][#4]-\EComp}
\newcommandx{\NPTimeC}[4][1=, 2=, 3=, 4=]
	{\NPTime[#1][#2][#3][#4]-\CComp}
\newcommandx{\CoNPTimeH}[4][1=, 2=, 3=, 4=]
	{\CoNPTime[#1][#2][#3][#4]-\HComp}
\newcommandx{\CoNPTimeE}[4][1=, 2=, 3=, 4=]
	{\CoNPTime[#1][#2][#3][#4]-\EComp}
\newcommandx{\CoNPTimeC}[4][1=, 2=, 3=, 4=]
	{\CoNPTime[#1][#2][#3][#4]-\CComp}
\newcommandx{\APTimeH}[4][1=, 2=, 3=, 4=]
	{\APTime[#1][#2][#3][#4]-\HComp}
\newcommandx{\APTimeE}[4][1=, 2=, 3=, 4=]
	{\APTime[#1][#2][#3][#4]-\EComp}
\newcommandx{\APTimeC}[4][1=, 2=, 3=, 4=]
	{\APTime[#1][#2][#3][#4]-\CComp}
\newcommandx{\PSpace}[4][1=, 2=, 3=, 4=]
	{\txtargname{PSpace#4}[#2][#3]{#1}\xspace}
\newcommandx{\PSpaceH}[4][1=, 2=, 3=, 4=]
	{\PSpace[#1][#2][#3][#4]-\HComp}
\newcommandx{\PSpaceE}[4][1=, 2=, 3=, 4=]
	{\PSpace[#1][#2][#3][#4]-\EComp}
\newcommandx{\PSpaceC}[4][1=, 2=, 3=, 4=]
	{\PSpace[#1][#2][#3][#4]-\CComp}
\newcommandx{\NPSpaceH}[4][1=, 2=, 3=, 4=]
	{\NPSpace[#1][#2][#3][#4]-\HComp}
\newcommandx{\NPSpaceE}[4][1=, 2=, 3=, 4=]
	{\NPSpace[#1][#2][#3][#4]-\EComp}
\newcommandx{\NPSpaceC}[4][1=, 2=, 3=, 4=]
	{\NPSpace[#1][#2][#3][#4]-\CComp}
\newcommandx{\CoNPSpaceH}[4][1=, 2=, 3=, 4=]
	{\CoNPSpace[#1][#2][#3][#4]-\HComp}
\newcommandx{\CoNPSpaceE}[4][1=, 2=, 3=, 4=]
	{\CoNPSpace[#1][#2][#3][#4]-\EComp}
\newcommandx{\CoNPSpaceC}[4][1=, 2=, 3=, 4=]
	{\CoNPSpace[#1][#2][#3][#4]-\CComp}
\newcommandx{\APSpaceH}[4][1=, 2=, 3=, 4=]
	{\APSpace[#1][#2][#3][#4]-\HComp}
\newcommandx{\APSpaceE}[4][1=, 2=, 3=, 4=]
	{\APSpace[#1][#2][#3][#4]-\EComp}
\newcommandx{\APSpaceC}[4][1=, 2=, 3=, 4=]
	{\APSpace[#1][#2][#3][#4]-\CComp}
\newcommandx{\ExpTime}[4][1=, 2=, 3=, 4=]
	{\txtargname{ExpTime#4}[#2][#3]{#1}\xspace}
\newcommandx{\ExpTimeH}[4][1=, 2=, 3=, 4=]
	{\ExpTime[#1][#2][#3][#4]-\HComp}
\newcommandx{\ExpTimeE}[4][1=, 2=, 3=, 4=]
	{\ExpTime[#1][#2][#3][#4]-\EComp}
\newcommandx{\ExpTimeC}[4][1=, 2=, 3=, 4=]
	{\ExpTime[#1][#2][#3][#4]-\CComp}
\newcommandx{\NExpTimeH}[4][1=, 2=, 3=, 4=]
	{\NExpTime[#1][#2][#3][#4]-\HComp}
\newcommandx{\NExpTimeE}[4][1=, 2=, 3=, 4=]
	{\NExpTime[#1][#2][#3][#4]-\EComp}
\newcommandx{\NExpTimeC}[4][1=, 2=, 3=, 4=]
	{\NExpTime[#1][#2][#3][#4]-\CComp}
\newcommandx{\CoNExpTimeH}[4][1=, 2=, 3=, 4=]
	{\CoNExpTime[#1][#2][#3][#4]-\HComp}
\newcommandx{\CoNExpTimeE}[4][1=, 2=, 3=, 4=]
	{\CoNExpTime[#1][#2][#3][#4]-\EComp}
\newcommandx{\CoNExpTimeC}[4][1=, 2=, 3=, 4=]
	{\CoNExpTime[#1][#2][#3][#4]-\CComp}
\newcommandx{\AExpTimeH}[4][1=, 2=, 3=, 4=]
	{\AExpTime[#1][#2][#3][#4]-\HComp}
\newcommandx{\AExpTimeE}[4][1=, 2=, 3=, 4=]
	{\AExpTime[#1][#2][#3][#4]-\EComp}
\newcommandx{\AExpTimeC}[4][1=, 2=, 3=, 4=]
	{\AExpTime[#1][#2][#3][#4]-\CComp}
\newcommandx{\ExpSpace}[4][1=, 2=, 3=, 4=]
	{\txtargname{ExpSpace#4}[#2][#3]{#1}\xspace}
\newcommandx{\ExpSpaceH}[4][1=, 2=, 3=, 4=]
	{\ExpSpace[#1][#2][#3][#4]-\HComp}
\newcommandx{\ExpSpaceE}[4][1=, 2=, 3=, 4=]
	{\ExpSpace[#1][#2][#3][#4]-\EComp}
\newcommandx{\ExpSpaceC}[4][1=, 2=, 3=, 4=]
	{\ExpSpace[#1][#2][#3][#4]-\CComp}
\newcommandx{\NExpSpaceH}[4][1=, 2=, 3=, 4=]
	{\NExpSpace[#1][#2][#3][#4]-\HComp}
\newcommandx{\NExpSpaceE}[4][1=, 2=, 3=, 4=]
	{\NExpSpace[#1][#2][#3][#4]-\EComp}
\newcommandx{\NExpSpaceC}[4][1=, 2=, 3=, 4=]
	{\NExpSpace[#1][#2][#3][#4]-\CComp}
\newcommandx{\CoNExpSpaceH}[4][1=, 2=, 3=, 4=]
	{\CoNExpSpace[#1][#2][#3][#4]-\HComp}
\newcommandx{\CoNExpSpaceE}[4][1=, 2=, 3=, 4=]
	{\CoNExpSpace[#1][#2][#3][#4]-\EComp}
\newcommandx{\CoNExpSpaceC}[4][1=, 2=, 3=, 4=]
	{\CoNExpSpace[#1][#2][#3][#4]-\CComp}
\newcommandx{\AExpSpaceH}[4][1=, 2=, 3=, 4=]
	{\AExpSpace[#1][#2][#3][#4]-\HComp}
\newcommandx{\AExpSpaceE}[4][1=, 2=, 3=, 4=]
	{\AExpSpace[#1][#2][#3][#4]-\EComp}
\newcommandx{\AExpSpaceC}[4][1=, 2=, 3=, 4=]
	{\AExpSpace[#1][#2][#3][#4]-\CComp}
\newcommandx{\NonEleTime}[4][1=, 2=, 3=, 4=]
	{\txtargname{NonElementaryTime#4}[#2][#3]{#1}\xspace}
\newcommandx{\NonEleTimeH}[4][1=, 2=, 3=, 4=]
	{\NonEleTime[#1][#2][#3][#4]-\HComp}
\newcommandx{\NonEleTimeE}[4][1=, 2=, 3=, 4=]
	{\NonEleTime[#1][#2][#3][#4]-\EComp}
\newcommandx{\NonEleTimeC}[4][1=, 2=, 3=, 4=]
	{\NonEleTime[#1][#2][#3][#4]-\CComp}
\newcommandx{\NonEleSpace}[4][1=, 2=, 3=, 4=]
	{\txtargname{NonElementarySpace#4}[#2][#3]{#1}\xspace}
\newcommandx{\NonEleSpaceH}[4][1=, 2=, 3=, 4=]
	{\NonEleSpace[#1][#2][#3][#4]-\HComp}
\newcommandx{\NonEleSpaceE}[4][1=, 2=, 3=, 4=]
	{\NonEleSpace[#1][#2][#3][#4]-\EComp}
\newcommandx{\NonEleSpaceC}[4][1=, 2=, 3=, 4=]
	{\NonEleSpace[#1][#2][#3][#4]-\CComp}
\newcommandx{\DLHier}[4][2=, 3=, 4=]
	{\mthargset[0]{\Delta#4}[#1][#3]{#2}\xspace}
\newcommandx{\DLHierH}[4][2=, 3=, 4=]
	{\DLHier{#1}[#2][#3][#4]-\HComp}
\newcommandx{\DLHierE}[4][2=, 3=, 4=]
	{\DLHier{#1}[#2][#3][#4]-\EComp}
\newcommandx{\DLHierC}[4][2=, 3=, 4=]
	{\DLHier{#1}[#2][#3][#4]-\CComp}
\newcommandx{\SLHier}[4][2=, 3=, 4=]
	{\mthargset[0]{\Sigma#4}[#1][#3]{#2}\xspace}
\newcommandx{\SLHierH}[4][2=, 3=, 4=]
	{\SLHier{#1}[#2][#3][#4]-\HComp}
\newcommandx{\SLHierE}[4][2=, 3=, 4=]
	{\SLHier{#1}[#2][#3][#4]-\EComp}
\newcommandx{\SLHierC}[4][2=, 3=, 4=]
	{\SLHier{#1}[#2][#3][#4]-\CComp}
\newcommandx{\PLHier}[4][2=, 3=, 4=]
	{\mthargset[0]{\Pi#4}[#1][#3]{#2}\xspace}
\newcommandx{\PLHierH}[4][2=, 3=, 4=]
	{\PLHier{#1}[#2][#3][#4]-\HComp}
\newcommandx{\PLHierE}[4][2=, 3=, 4=]
	{\PLHier{#1}[#2][#3][#4]-\EComp}
\newcommandx{\PLHierC}[4][2=, 3=, 4=]
	{\PLHier{#1}[#2][#3][#4]-\CComp}
\newcommandx{\DBHier}[4][2=, 3=, 4=]
	{\mthargset[3]{\Delta#4}[#1][#3]{#2}\xspace}
\newcommandx{\DBHierH}[4][2=, 3=, 4=]
	{\DBHier{#1}[#2][#3][#4]-\HComp}
\newcommandx{\DBHierE}[4][2=, 3=, 4=]
	{\DBHier{#1}[#2][#3][#4]-\EComp}
\newcommandx{\DBHierC}[4][2=, 3=, 4=]
	{\DBHier{#1}[#2][#3][#4]-\CComp}
\newcommandx{\SBHier}[4][2=, 3=, 4=]
	{\mthargset[3]{\Sigma#4}[#1][#3]{#2}\xspace}
\newcommandx{\SBHierH}[4][2=, 3=, 4=]
	{\SBHier{#1}[#2][#3][#4]-\HComp}
\newcommandx{\SBHierE}[4][2=, 3=, 4=]
	{\SBHier{#1}[#2][#3][#4]-\EComp}
\newcommandx{\SBHierC}[4][2=, 3=, 4=]
	{\SBHier{#1}[#2][#3][#4]-\CComp}
\newcommandx{\PBHier}[4][2=, 3=, 4=]
	{\mthargset[3]{\Pi#4}[#1][#3]{#2}\xspace}
\newcommandx{\PBHierH}[4][2=, 3=, 4=]
	{\PBHier{#1}[#2][#3][#4]-\HComp}
\newcommandx{\PBHierE}[4][2=, 3=, 4=]
	{\PBHier{#1}[#2][#3][#4]-\EComp}
\newcommandx{\PBHierC}[4][2=, 3=, 4=]
	{\PBHier{#1}[#2][#3][#4]-\CComp}
\newcommandx{\DPolHier}[4][2=, 3=, 4=]
	{\DLHier{#1}[#2][\argb{\mathrm{P}}{#3}][#4]}
\newcommandx{\DPolHierH}[4][2=, 3=, 4=]
	{\DPolHier{#1}[#2][#3][#4]-\HComp}
\newcommandx{\DPolHierE}[4][2=, 3=, 4=]
	{\DPolHier{#1}[#2][#3][#4]-\EComp}
\newcommandx{\DPolHierC}[4][2=, 3=, 4=]
	{\DPolHier{#1}[#2][#3][#4]-\CComp}
\newcommandx{\SPolHier}[4][2=, 3=, 4=]
	{\SLHier{#1}[#2][\argb{\mathrm{P}}{#3}][#4]}
\newcommandx{\SPolHierH}[4][2=, 3=, 4=]
	{\SPolHier{#1}[#2][#3][#4]-\HComp}
\newcommandx{\SPolHierE}[4][2=, 3=, 4=]
	{\SPolHier{#1}[#2][#3][#4]-\EComp}
\newcommandx{\SPolHierC}[4][2=, 3=, 4=]
	{\SPolHier{#1}[#2][#3][#4]-\CComp}
\newcommandx{\PPolHier}[4][2=, 3=, 4=]
	{\PLHier{#1}[#2][\argb{\mathrm{P}}{#3}][#4]}
\newcommandx{\PPolHierH}[4][2=, 3=, 4=]
	{\PPolHier{#1}[#2][#3][#4]-\HComp}
\newcommandx{\PPolHierE}[4][2=, 3=, 4=]
	{\PPolHier{#1}[#2][#3][#4]-\EComp}
\newcommandx{\PPolHierC}[4][2=, 3=, 4=]
	{\PPolHier{#1}[#2][#3][#4]-\CComp}
\newcommandx{\DAriHier}[4][2=, 3=, 4=]
	{\DLHier{#1}[#2][\argb{0}{#3}][#4]}
\newcommandx{\DAriHierH}[4][2=, 3=, 4=]
	{\DAriHier{#1}[#2][#3][#4]-\HComp}
\newcommandx{\DAriHierE}[4][2=, 3=, 4=]
	{\DAriHier{#1}[#2][#3][#4]-\EComp}
\newcommandx{\DAriHierC}[4][2=, 3=, 4=]
	{\DAriHier{#1}[#2][#3][#4]-\CComp}
\newcommandx{\SAriHier}[4][2=, 3=, 4=]
	{\SLHier{#1}[#2][\argb{0}{#3}][#4]}
\newcommandx{\SAriHierH}[4][2=, 3=, 4=]
	{\SAriHier{#1}[#2][#3][#4]-\HComp}
\newcommandx{\SAriHierE}[4][2=, 3=, 4=]
	{\SAriHier{#1}[#2][#3][#4]-\EComp}
\newcommandx{\SAriHierC}[4][2=, 3=, 4=]
	{\SAriHier{#1}[#2][#3][#4]-\CComp}
\newcommandx{\PAriHier}[4][2=, 3=, 4=]
	{\PLHier{#1}[#2][\argb{0}{#3}][#4]}
\newcommandx{\PAriHierH}[4][2=, 3=, 4=]
	{\PAriHier{#1}[#2][#3][#4]-\HComp}
\newcommandx{\PAriHierE}[4][2=, 3=, 4=]
	{\PAriHier{#1}[#2][#3][#4]-\EComp}
\newcommandx{\PAriHierC}[4][2=, 3=, 4=]
	{\PAriHier{#1}[#2][#3][#4]-\CComp}
\newcommandx{\DAnaHier}[4][2=, 3=, 4=]
	{\DLHier{#1}[#2][\argb{1}{#3}][#4]}
\newcommandx{\DAnaHierH}[4][2=, 3=, 4=]
	{\DAnaHier{#1}[#2][#3][#4]-\HComp}
\newcommandx{\DAnaHierE}[4][2=, 3=, 4=]
	{\DAnaHier{#1}[#2][#3][#4]-\EComp}
\newcommandx{\DAnaHierC}[4][2=, 3=, 4=]
	{\DAnaHier{#1}[#2][#3][#4]-\CComp}
\newcommandx{\SAnaHier}[4][2=, 3=, 4=]
	{\SLHier{#1}[#2][\argb{1}{#3}][#4]}
\newcommandx{\SAnaHierH}[4][2=, 3=, 4=]
	{\SAnaHier{#1}[#2][#3][#4]-\HComp}
\newcommandx{\SAnaHierE}[4][2=, 3=, 4=]
	{\SAnaHier{#1}[#2][#3][#4]-\EComp}
\newcommandx{\SAnaHierC}[4][2=, 3=, 4=]
	{\SAnaHier{#1}[#2][#3][#4]-\CComp}
\newcommandx{\PAnaHier}[4][2=, 3=, 4=]
	{\PLHier{#1}[#2][\argb{1}{#3}][#4]}
\newcommandx{\PAnaHierH}[4][2=, 3=, 4=]
	{\PAnaHier{#1}[#2][#3][#4]-\HComp}
\newcommandx{\PAnaHierE}[4][2=, 3=, 4=]
	{\PAnaHier{#1}[#2][#3][#4]-\EComp}
\newcommandx{\PAnaHierC}[4][2=, 3=, 4=]
	{\PAnaHier{#1}[#2][#3][#4]-\CComp}
\newcommandx{\DBorHier}[4][2=, 3=, 4=]
	{\DBHier{#1}[#2][\argb{\mathrm{B}}{#3}][#4]}
\newcommandx{\DBorHierH}[4][2=, 3=, 4=]
	{\DBorHier{#1}[#2][#3][#4]-\HComp}
\newcommandx{\DBorHierE}[4][2=, 3=, 4=]
	{\DBorHier{#1}[#2][#3][#4]-\EComp}
\newcommandx{\DBorHierC}[4][2=, 3=, 4=]
	{\DBorHier{#1}[#2][#3][#4]-\CComp}
\newcommandx{\SBorHier}[4][2=, 3=, 4=]
	{\SBHier{#1}[#2][\argb{\mathrm{B}}{#3}][#4]}
\newcommandx{\SBorHierH}[4][2=, 3=, 4=]
	{\SBorHier{#1}[#2][#3][#4]-\HComp}
\newcommandx{\SBorHierE}[4][2=, 3=, 4=]
	{\SBorHier{#1}[#2][#3][#4]-\EComp}
\newcommandx{\SBorHierC}[4][2=, 3=, 4=]
	{\SBorHier{#1}[#2][#3][#4]-\CComp}
\newcommandx{\PBorHier}[4][2=, 3=, 4=]
	{\PBHier{#1}[#2][\argb{\mathrm{B}}{#3}][#4]}
\newcommandx{\PBorHierH}[4][2=, 3=, 4=]
	{\PBorHier{#1}[#2][#3][#4]-\HComp}
\newcommandx{\PBorHierE}[4][2=, 3=, 4=]
	{\PBorHier{#1}[#2][#3][#4]-\EComp}
\newcommandx{\PBorHierC}[4][2=, 3=, 4=]
	{\PBorHier{#1}[#2][#3][#4]-\CComp}
\newcommand{\HComp}
	{\txtname{hard}\xspace}
\newcommand{\EComp}
	{\txtname{easy}\xspace}
\newcommand{\CComp}
	{\txtname{complete}\xspace}
\newcommand{\WinSet}{\mthset{Win}}
\newcommand{\priFun}{\mthfun{p}}
\newcommand{\force}{\mthfun{force}}
\newcommand{\fforce}{\force_{0}}
\newcommand{\sforce}{\force_{1}}
\newcommand{\Inf}{\mthfun{Inf}}
\renewcommand{\bar}[1]{\overline{#1}}
\newcommand{\PGSolver}{\mthsym{PGSolver}}
\newcommand{\RE}{\mthsym{RE}}
\newcommand{\SP}{\mthsym{SP}}
\renewcommand{\APT}{\mthsym{APT}}
\newcommand{\abortT}{\mthsym{abort}[T]}
\newcommand{\abortM}{\mthsym{abort}[M]}
\tikzstyle{every node} =
\tikzstyle{every edge} +=
\tikzstyle{noall} =
\tikzstyle{nodraw} =
\tikzstyle{nofill} =
\tikzstyle{cnode} =
\tikzstyle{snode} =
\tikzstyle{lnode} =
\tikzstyle{pnode} =
	\newcommand{\figexmgam}
		{
		\begin{wrapfigure}[11]{r}{0.4\textwidth}
			\vspace{-2.1em}
			\hspace{-2.1em}
				\footnotesize
				\scalebox{0.7}[0.7]
					{
					\begin{tikzpicture}[node distance = 5em, bend angle = 30]

					\node [cnode]
							(5)
							[]
							{$\stackrel{5}{\qSym[5]}$};
					\node [cnode]
							(0)
							[left of = 5]
							{$\stackrel{3}{\qSym[0]}$};
					\node [cnode]
							(3)
							[right of = 5]
							{$\stackrel{2}{\qSym[3]}$};
					\node [snode]
							(1)
							[right of = 3]
							{$\stackrel{1}{\qSym[1]}$};
					\node [snode]
							(2)
							[below of = 5]
							{$\stackrel{5}{\qSym[2]}$};
					\node [cnode]
							(4)
							[below of = 2]
							{$\stackrel{2}{\qSym[4]}$};
					\node [snode]
							(6)
							[right of = 4]
							{$\stackrel{2}{\qSym[6]}$};
					\path[->]
							(0)		edge	[]
													(5)
										edge	[]
													(2)
							(1)		edge	[bend angle = 45, bend right]
													(5)
										edge	[]
													(2)
										edge	[]
													(3)
							(2)		edge	[]
													(4)
										edge	[]
													(6)
										edge	[]
													(3)
							(3)		edge	[bend right]
													(5)
										edge	[]
													(6)
										edge 	[loop left]
													()
							(4)		edge	[bend right]
													(6)
							(5)		edge	[]
													(2)
							(6)		edge	[bend right]
													(4)
										edge	[loop right]
													()
										edge	[bend angle = 110, bend left]
													(0)
							;

					\end{tikzpicture}
					}
				\vspace{-4.2em}
				\caption{\label{fig:pargame} \small A parity game.}
			\vspace{-8em}
		\end{wrapfigure}
		}
\begin{document}
	\pagenumbering{arabic}
 \pagestyle{plain}
	\title{Solving Parity Games Using An Automata-Based Algorithm \thanks{An earlier version of the paper appeared in \cite{DMPV16}.			This version corrects a minor error in that earlier version. See Page 6.} \thanks{Work supported by NSF grants CCF-1319459 and IIS-1527668, NSF
			Expeditions in Computing project "ExCAPE: Expeditions in Computer
			Augmented Program Engineering", BSF grant 9800096,  ERC Advanced Investigator
			Grant 291528 (``Race'') at Oxford and 		
			GNCS 2016: Logica, Automi e Giochi per Sistemi Auto-adattivi.} 
}

	\author
		{%
		Antonio Di Stasio$^{1}$, Aniello Murano$^{1}$, Giuseppe
		Perelli$^{2}$~\thanks{Part of the work has been done while visiting Rice University.}, Moshe Y. Vardi$^{3}$
		}

	\institute{$^{1}$Universit\`a di Napoli ``Federico II'', $^{2}$University of
	Oxford $^{3}$Rice University}

	\maketitle


\begin{abstract}

	\emph{Parity games} are abstract infinite-round games that take an
	important role in formal verification.
 	In the basic setting, these games are two-player, turn-based, and played under
	perfect information on directed	graphs, whose nodes are labeled with
	priorities.
	The winner of a play is	determined according to the parities (even or odd) of
	the minimal priority occurring infinitely often in that play.
	The problem of finding a winning strategy in parity games is known to be in
	\UPTime $\cap$ \CoUPTime and deciding whether a polynomial time solution
	exists is a long-standing open question.
	In the last two decades, a variety of algorithms
	have been proposed. Many of them have been also implemented in a platform
	named \PGSolver.
        This has enabled an empirical evaluation of these algorithms and a better
        understanding of their relative merits.

	In this paper, we further contribute to this subject by implementing, for
	the first time, 
        an algorithm based on alternating automata.
	More precisely, we consider an algorithm introduced by	Kupferman and Vardi
	that solves a parity game by solving the emptiness problem of a corresponding
	alternating parity automaton.
        Our empirical evaluation demonstrates that
        this algorithm outperforms other algorithms when the game has a 
        a small number of priorities relative to the size of the game.
        In many concrete applications, we do indeed end up with parity games
	where the number of priorities is relatively small. This makes
	the new algorithm quite useful in practice.
\vspace{-1em}
\end{abstract}



\begin{section}{Introduction}
	\emph{Parity games}~\cite{EJ91,Zie98}
	are abstract infinite-duration games that represent a
	powerful mathematical framework to address fundamental questions in computer
	science.  They are intimately related to other infinite-round games, such
	as \emph{mean} and \emph{discounted} payoff, \emph{stochastic}, and
	\emph{multi-agent} games~\cite{CHJ05,CDHR10,CJH04,Ber07}.
	\\ \indent
	In the basic setting, parity games are two-player, turn-based, played on
	directed graphs whose nodes are labeled with priorities (also called,
	\emph{colors}) and players have perfect information about the adversary moves.
	The two players, Player~$\frsentnam$ and Player~$\scnentnam$, take turns moving
	a token along the edges of the graph starting from a designated initial node.
	Thus, a play induces an infinite path and Player~$\frsentnam$ wins the play if
	the smallest priority visited infinitely often is even;
  otherwise, Player~$\scnentnam$ wins the play.
	The problem of deciding if Player~$\scnentnam$ has a winning strategy (i.e.,
	can induce a winning play) in a given parity game is known to be in \UPTime
	$\cap$ \CoUPTime~\cite{Jur98};
	whether a polynomial time solution exists is a long-standing open
	question~\cite{Wilke01}.
	
	Several algorithms for solving parity games have been proposed in the last two
	decades, aiming to tighten the known complexity bounds for the problem, as
	well as come out with solutions that work well in practice.  Among the latter,
	we recall the recursive algorithm (\RE) proposed by Zielonka~\cite{Zie98},
	the Jurdzi\'nski's small-progress measures algorithm~\cite{Jur00} (\SP),
	the strategy-improvement algorithm by Jurdzi\'nski and V\"oge~\cite{VJ00},
	the (subexponential) algorithm by Jurdzi\'nki, Paterson and
	Zwick~\cite{JPZ08}, and the big-step algorithm by Schewe~\cite{Sch07}.
	These algorithms have been implemented in the platform \PGSolver, and extensively
	investigated experimentally~\cite{FL09,FL09PGSolver}.
	This study has also led to a few key optimizations, such as the
	decomposition into strongly connected components, the removal of
	self-cycles on nodes, and the application of a priority
	compression~\cite{ACH09,Jur00}.
	Specifically, the latter allows to reduce a game to an equivalent game where the
	priorities are replaced in such a way they form a dense sequence of natural
	numbers, $1,2, \ldots, d$, for a minimal possible $d$.
	Table~\ref{tab:parityalg} summarizes the mentioned algorithms along with
	their known worst-case complexity, where the parameters $n$, $e$, and
	$d$ denote the number of nodes, edges, and priorities, respectively
	(see~\cite{FL09,FL09PGSolver}, for more).
	
	\begin{table}[htb]
		\vspace{-1em}
			\begin{center}
				\scalebox{0.9}[0.9]
					{
					\begin{tabular}{|l||l|}
						\hline
						Algorithm &
						Computational Complexity\\
						\hline
						\hline
		  Recursive (RE)~\cite{Zie98} & $O(e \cdot n^d)$\\
		  Small Progress Measures (SP)~\cite{Jur00} & $O(d \cdot e \cdot
		  (\frac{n}{d})^{\frac{d}{2}})$\\
		  Strategy Improvement (SI)~\cite{VJ00} & $O(2^e \cdot n \cdot e)$\\
		  Dominion Decomposition (DD)~\cite{JPZ08}& $O(n^{\sqrt{n}})$\\
		  Big Step (BS)~\cite{Sch07}& $O(e \cdot n^{\frac{1}{3}d})$ \\

						\hline
					\end{tabular}
					}
					\vspace{0.5em}
					\caption{\label{tab:parityalg} \small{Parity algorithms along with
						their computational complexities.}}
			\end{center}
			\vspace{-2.5em}
		\end{table}
	In formal system design ~\cite{CGP02,CE81,KVW00,QS81}, parity
	games arise as a natural evaluation machinery for the automatic synthesis and
	verification of distributed and reactive
	systems~\cite{KVW01,Thomas09,AKM12}, as they allow to express liveness
	and safety properties in a very elegant and powerful way\cite{MMS13}.
	Specifically, in model-checking, one can check the correctness of a
	system with respect to a desired behavior, by checking whether a model of the
	system, that is, a \emph{Kripke structure}, is correct with respect to a
	formal specification of its behavior, usually described in terms of a modal
	logic formula.  In case the specification is given as a $\mu$-calculus
	formula~\cite{Koz83}, the model checking question can be rephrased, in
	linear-time, as a parity game~\cite{EJ91}.  So, a parity game solver
	can be used as a model checker for a $\mu$-calculus specification (and
	vice-versa), as well as for fragments such as \CTL,  \CTLS, and the like.
	\\ \indent
	In the automata-theoretic approach to $\mu$-calculus model checking, under a
	linear-time translation, one can also reduce the verification problem to a
	question about automata.
	More precisely, one can take the product of the model and an alternating tree
	automaton accepting all tree models of the specification. This product can be
	defined as an alternating word parity automaton over a singleton alphabet, and
	the system is correct with respect to the specification iff this automaton is
	nonempty \cite{KVW00}.
	It has been proved there that the nonemptiness problems for nondeterministic
	tree parity automata and alternating word parity automata over a singleton
	alphabet are equivalent and that their complexities coincide.
	For this reason, in the sequel we refer to these two kinds of automata just as
	parity automata.
	Hence, algorithms for the solution of the $\mu$-calculus model checking
	problem, parity games, and the emptiness problem for parity automata can be
	interchangeably used to solve any of these problems, as they are linear-time
	equivalent.
	Several algorithms have been proposed in the literature to solve the
	non-emptiness problem of parity automata, but none of them has been ever
 	implemented under the purpose of solving parity games.
%
%
	%
	\\ \indent
	In this paper, we study and implement an algorithm, which we call \APT,
	introduced by Kupferman and Vardi in~\cite{KV98}, for solving parity games via
	emptiness checking of alternating parity automata, and evaluate its performance 
        over the \PGSolver\ platform.
	This algorithm has been sketched in~\cite{KV98}, but not spelled out in detail
	and without a correctness proof, two major gaps that we fill here.
	The core idea of the \APT\ algorithm is an efficient translation to \emph{weak
	alternating automata}~\cite{MSS88}.
	These are a special case of B\"uchi automata in which the set of states is
	partitioned into partially ordered sets.
	Each set is classified as accepting or rejecting.
	The transition function is restricted so that the automaton either stays at
	the same set or moves to a smaller set in the partial order.
	Thus, each run of a weak automaton eventually gets trapped in some set in the
	partition.
%
	The special structure of weak automata is reflected in their attractive
	computational properties.
	In particular, the nonemptiness problem for weak automata can be solved in
	linear time~\cite{KVW00}, while the best known upper bound for the
	nonemptiness problem for B\"uchi automata is quadratic \cite{ChatHenz12}.
	Given an alternating parity word automaton with $n$ states and $d$ colors, the
	\APT\ algorithm checks the emptiness of an equivalent weak alternating word
	automaton with O($n^d$) states.
	The construction goes through a sequence of $d$ intermediate automata.
	Each automaton in the sequence refines the state space of its predecessor and
	has one less color to check in its parity
	condition.
	Since one can check in linear time the emptiness of such an automaton, we get
	an $O(n^d)$ overall complexity for the addressed problem. \APT\ does not
	construct the equivalent weak automaton directly, but applies the emptiness
	test directly, constructing the equivalent weak automaton on the fly.
	\\ \indent
	We evaluated our implementation of the \APT\ algorithm over several random
	game instances,
	comparing it with \RE\ and \SP\ algorithms.
	Our main finding is that when the number of the priority in a game is
	significantly smaller (specifically, logarithmically) than the number of nodes in
	the game graph, 
	the \APT\ algorithm significantly outperform the other algorithms.
	We take this as an important development since in many real applications of
	parity games we do get game instances where the number of priorities is indeed
	very small compared to the size of the game graph.  For example, coming back
	to the automata-theoretic approach to $\mu$-calculus model
	checking~\cite{KVW00}, the translation usually results in a parity automaton
	(and thus in a parity game) with few priorities, but with a huge number of
	nodes.
	This is due to the fact that usually specification formulas are small, while
	the system is big.
	A similar phenomenon occurs in the application of parity games to reactive
	synthesis \cite{Thomas09}.
	\\ \indent
	\textbf{Outline}
	The sequel of the paper is as follows.
	Section~2 gives preliminary concepts on parity games.
	Section~3 introduces extended parity games and describes the \APT\ algorithm
	in detail, including a proof of correctness.
	Section~4 describes the implementation of the \APT\ algorithm in the tool \PGSolver.
	Section~5 contains the experimental results on runtime for \APT\ over random
	benchmarks.
	Finally, Section~6 gives some conclusions.
	%
	%
%
%

\end{section}



\begin{section}{Preliminaries}
	\label{sec:prl}

	In this section, we briefly recall some basic concepts regarding parity games.
	A \emph{Parity Game} (\PG, for short) is a tuple $\GmName \defeq \ArStr$,
	where $\FPsSet$ and $\SPsSet$ are two finite disjoint sets of nodes for
	Player~$\frsentnam$ and Player~$\scnentnam$, respectively, with $\PsSet =
	\FPsSet \cup \SPsSet$, $\MvRel \subseteq \PsSet \times \PsSet$, is the
	left-total binary relation of moves, and $\priFun: \PsSet \to \SetN$ is the
	priority function~\footnote{Here, we mean the set of non-negative integers,
	excluding zero.}.
	Each player moves a token along nodes by means of the relation $\MvRel$.
	By $\MvRel(\psElm) \defeq \set{\psElm' \in \PsSet}{(\psElm, \psElm') \in
	\MvRel}$ we denote the set of nodes to which the token can be moved, starting
	from node $q$.

\setlength{\columnsep}{-2pt}%

	\figexmgam

	As a running example, consider the \PG depicted in
	Figure~\ref{fig:pargame}. The set of players's nodes is
	$\FPsSet = \{\qSym[0], \qSym[3],	\qSym[4],	\qSym[5]\}$ and $\SPsSet =
	\{\qSym[1], \qSym[2], \qSym[6] \}$; we use circles to denote
	nodes belonging to Player~$\frsentnam$ and squares	for
	those belonging to Player~$\scnentnam$.
	$\MvRel$ is described by arrows. Finally,	the priority function $\priFun$ is
	given by $\priFun(\qSym[1]) = 1$,	$\priFun(\qSym[3]) = \priFun(\qSym[4]) =
	\priFun(\qSym[6]) = 2$, $\priFun(\qSym[0]) = 3$, and $\priFun(\qSym[2]) =
	\priFun(\qSym[5]) = 5$.

	A \emph{play} (\resp \emph{history}) over $\GmName$ is an infinite (\resp
	finite) sequence $\playElm = \psElm[1] \cdot \psElm[2] \cdot \ldots \in
	\PthSet \subseteq \PsSet[][\omega]$ (\resp $\hstElm = \psElm[1] \cdot \ldots
	\cdot \psElm[n] \in \HstSet \subseteq \PsSet[][*]$) of nodes that agree
	with $\MvRel$, \ie, $(\playElm[i], \playElm[i + 1]) \in \MvRel$, for each
	natural number $i \in \SetN$ (\resp $i \in \numcc{1}{n - 1}$).
	In the \PG in Figure~\ref{fig:pargame}, a possible play is $\bar{\playElm} =
	\qSym[1] \cdot \qSym[5] \cdot \qSym[2] \cdot (\qSym[3])^{\omega}$, while a
	possible history is given by $\bar{\hstElm} = \qSym[1] \cdot \qSym[5] \cdot
	\qSym[2] \cdot \qSym[3]$.
	
	For a given play $\playElm = \psElm[1] \cdot \psElm[2] \cdot \ldots$, by
	$\priFun(\playElm) = \priFun(\psElm[1]) \cdot \priFun(\psElm[2]) \cdot \ldots
	\in \SetN[][\omega]$ we denote the associated priority sequence.
	As an example, the associated priority sequence to $\bar{\playElm}$ is given
	by $\priFun(\bar{\playElm}) = 1 \cdot 5 \cdot 5 \cdot (2)^{\omega}$.
	
	For a given history $\hstElm = \psElm[1] \cdot \ldots \cdot \psElm[n]$, by
	$\fst{\hstElm}\defeq \psElm[1]$ and $\lst{\hstElm} \defeq \psElm[n]$ we denote
	the first and last node occurring in $\hstElm$, respectively.
	For the example history, we have that $\fst{\bar{\hstElm}} = \qSym[1]$ and
	$\lst{\bar{\hstElm}} = \qSym[3]$.
	By $\FHstSet$ (\resp $\SHstSet$) we denote the set of histories $\hstElm$ such
	that $\lst{\hstElm} \in \FPsSet$ (\resp $\lst{\hstElm} \in \SPsSet$).
	Moreover, by $\Inf(\playElm)$ and $\Inf(\priFun(\playElm))$ we denote the set
	of nodes and priorities that occur infinitely often in $\playElm$ and
	$\priFun(\playElm)$, respectively.
	Finally, a play $\playElm$ is winning for Player $\frsentnam$  (resp., Player
	$\scnentnam$) if $\min(\Inf(\priFun(\playElm)))$ is even (\resp odd).
	In the running example, we have that $\Inf(\bar{\playElm}) = \{\qSym[3]\}$ and
	$\Inf(\priFun(\bar{\playElm})) = \{2\}$ and so, $\playElm$ is winning for
	Player $\frsentnam$.
	
	A Player $\frsentnam$ (\resp Player $\scnentnam$) strategy is a function
	$\fstrElm: \FHstSet \to \PsSet$ (\resp $\sstrElm: \SHstSet \to \PsSet$) such
	that, for all $\hstElm \in \FHstSet$ (\resp $\hstElm \in \SHstSet$), it holds
	that $(\lst{\hstElm}, \fstrElm(\hstElm)) \in \MvRel$ (\resp $\lst{\hstElm},
	\sstrElm(\hstElm)) \in \MvRel$).

	Given a node $\psElm$, Player $\frsentnam$ and a Player $\scnentnam$
	strategies $\fstrElm$ and $\sstrElm$, the play of these two strategies,
	denoted by $\playFun(\psElm, \fstrElm, \sstrElm)$, is the only play $\playElm$
	in the game that starts in $\psElm$ and agrees with both Player $\frsentnam$
	and Player $\scnentnam$ strategies, \ie, for all $i \in \SetN$, if
	$\playElm[i] \in \FPsSet$, then $\playElm[i + 1] = \fstrElm(\playElm[i])$, and
	$\playElm[i + 1] = \sstrElm(\playElm[i])$, otherwise.
	
	A strategy $\fstrElm$ (\resp $\sstrElm$) is \emph{memoryless} if, for all
	$\hstElm[1], \hstElm[2] \in \FHstSet$ (\resp $\hstElm[1], \hstElm[2] \in
	\SHstSet$), with $\lst{\hstElm[1]} = \lst{\hstElm[2]}$, it holds that
	$\fstrElm(\hstElm[1]) = \fstrElm(\hstElm[2])$ (\resp $\fstrElm(\hstElm[1]) =
	\sstrElm(\hstElm[2])$).
	Note that a memoryless strategy can be defined on the set of nodes, instead of
	the set of histories.
	Thus we have that they are of the form $\fstrElm: \FPsSet \to \PsSet$ and
	$\sstrElm: \SPsSet \to \PsSet$.

	We say that Player $\frsentnam$ (\resp Player $\scnentnam$) \emph{wins} the
	game $\GmName$ from node $\psElm$ if there exists a Player $\frsentnam$
	(\resp Player $\scnentnam$) strategy $\fstrElm$ (\resp $\sstrElm$) such that,
	for all Player $\scnentnam$ (\resp Player $\frsentnam$) strategies $\sstrElm$
	(\resp $\fstrElm$) it holds that $\playFun(\psElm, \fstrElm, \sstrElm)$ is
	winning for Player $\frsentnam$ (\resp Player $\scnentnam$).

	A node $\psElm$ is \emph{winning} for Player $\frsentnam$ (\resp Player
	$\scnentnam$) if Player $\frsentnam$ (\resp Player $\scnentnam$) wins the game
	from $\psElm$.
	By $\WinSet[0](\GmName)$ (\resp $\WinSet[1](\GmName)$) we denote the set of
	winning nodes in $\GmName$ for Player $\frsentnam$ (\resp Player
	$\scnentnam$).
	Parity games enjoy determinacy, meaning that, for every node $q$, either $q
	\in \WinSet[0](\GmName)$ or $q \in \WinSet[1](\GmName)$~\cite{EJ91}.
	Moreover, it can be proved that, if Player $\frsentnam$ (resp., Player
	$\scnentnam$) has a winning strategy from node $q$, then it has a memoryless
	winning strategy from the same node~\cite{Zie98}.

\end{section}



\begin{section}{Extended Parity Games}
	\label{sec:pargam}
	In this section we recall the \APT\ algorithm,
	introduced	by Kupferman and Vardi in~\cite{KV98}, to solve parity games	via
	emptiness checking of parity automata. More important, we fill two major	gaps
	from~\cite{KV98} which is to spell out in	details the
	definition of the \APT\ algorithm as well as to give a	correctness proof.
	The \APT\ algorithm makes use of two special (incomparable) sets of nodes,
	denoted by $\VSet$ and $\ASet$, and called set	of \emph{Visiting} and
	\emph{Avoiding}, respectively.
	Intuitively, a node is declared visiting for a player at the stage in
	which it is clear	that, by reaching that node, he can surely induce a
	winning play and thus	winning the game.
	Conversely, a node is declared avoiding for a player whenever it is clear
	that, by reaching that node, he is not able to induce any winning play and
	thus losing the game. The	algorithm, in turns, tries to
	partition all nodes of the game into these two	sets.
	The formal definition of the sets $V$ and $A$ follows.
	
	An \emph{Extended Parity Game}, (\EPG, for short) is a tuple $\ArStr[EPG]$
	where $\FPsSet$, $\SPsSet$, $\MvRel$ are as in \PG. The subsets of
	nodes $\VSet,	\ASet \subseteq 	\PsSet = \FPsSet \cup \SPsSet$ are two
	disjoint sets of	\emph{Visiting} and \emph{Avoiding} nodes, respectively.
	Finally, $\pFun: \PsSet \to \SetN$ is a parity function mapping every
	non-visiting and non-avoiding	set to a color.
	
	The notions of histories and plays are equivalent to the ones
	given for \PG. Moreover, as far as the definition of strategies is concerned,
	we say that a play $\playElm$ that is in $\PsSet \cdot (\PsSet \setminus
	(\VSet \cup \ASet))^{*} \cdot \VSet  \cdot \PsSet[][\omega]$ is winning for
	Player $\frsentnam$, while a play $\playElm$ that is in $\PsSet \cdot (\PsSet
	\setminus (\VSet \cup \ASet))^{*} \cdot \ASet \cdot \PsSet[][\omega]$ is
	winning for Player $\scnentnam$.
	For a play $\playElm$ that never hits either $\VSet$ or $\ASet$, we say that
	it is winning for Player $\frsentnam$ iff it satisfies the parity condition,
	\ie, $\min(\Inf(\pFun(\playElm)))$ is even, otherwise it is winning for 
	Player $\scnentnam$.
	
	Clearly, \PG{s} are special cases of \EPG{s} in which $\VSet = \ASet =
	\emptyset$.
	Conversely, one can transform an \EPG into an equivalent \PG with the
	same winning set by simply replacing every outgoing edge with loop to every
	node in $\VSet \cup \ASet$ and then relabeling each node in $\VSet$ and
	$\ASet$ with an even and an odd number, respectively.
	
	In order to describe how to solve \EPG{s}, we introduce some notation.
	By $\FSet[i] = \pFun[][-1](i)$ we denote the set of all nodes labeled with
	$i$.
	Doing that, the parity condition can be described as a finite sequence $\alpha
	= \FSet[1] \cdot \ldots \cdot \FSet[k]$ of sets, which alternates from sets of
	nodes with even priorities to sets of nodes with odd priorities and the other
	way round, forming a partition of the set of nodes, ordered by the priority
	assigned by the parity function.
	We call the set of nodes $F_i$ an even (\resp odd) parity set if $i$ is even
	(\resp odd).
	
	For a given set $\XSet \subseteq \PsSet$, by $\fforce(\XSet) = \set{\qElm
	\in \FPsSet}{\XSet \cap \MvRel(\qElm) \neq \emptyset} \cup \set{\qElm \in
	\SPsSet}{\XSet \subseteq \MvRel(\qElm)}$ we denote the set of nodes from
	which Player $\frsentnam$ can force to move in the set $\XSet$.
	Analogously, by $\sforce(\XSet) = \set{\qElm
	\in \SPsSet}{\XSet \cap \MvRel(\qElm) \neq \emptyset} \cup \set{\qElm \in
	\FPsSet}{\XSet \subseteq \MvRel(\qElm)}$ we denote the set of nodes from
	which Player $\scnentnam$ can force to move in the set $\XSet$. 
	For example, in the \PG in Figure~\ref{fig:pargame}, $\sforce(\{ \qSym[6]\}) =
	\{\qSym[2], \qSym[4],\qSym[6]\}$.

%
%
%
%

	We now introduce two functions that are co-inductively defined that will be
	used to compute the winning sets of Player $\frsentnam$ and Player
	$\scnentnam$, respectively.
	
	For a given \EPG $\GName$ with $\alpha$ being the representation of its parity
	condition, $\VSet$ its visiting set, and $\ASet$ its avoiding set, we define
	the functions $\WinSet[0](\alpha, \VSet, \ASet)$ and $\WinSet[1](\alpha,
	\ASet, \VSet)$.
	Informally, $\WinSet[0](\alpha, \VSet, \ASet)$ computes the set of nodes
	from which the player $\frsentnam$ has a strategy that avoids $\ASet$ and
	either force a visit in $\VSet$ or he wins the parity condition.
	The definition is symmetric for the function $\WinSet[1](\alpha, \ASet,
	\VSet)$. 
	Formally, we define $\WinSet[0](\alpha, \VSet, \ASet)$ and $\WinSet[1](\alpha,
	\ASet, \VSet)$ as follows.
	
	If $\alpha = \varepsilon$ is the empty sequence, then
	
	\begin{itemize}
		\item
			$\WinSet[0](\varepsilon, \VSet, \ASet) = \fforce(\VSet)$ and
	 
		\item
			$\WinSet[1](\varepsilon, \ASet, \VSet) = \sforce(\ASet)$.
	\end{itemize}

	Otherwise, if $\alpha = \FSet \cdot \alpha'$, for some set $\FSet$, then
	
	\begin{itemize}
		\item
			$\WinSet[0](\FSet \cdot \alpha', \VSet, \ASet) = \mu \YSet (\PsSet \setminus
			\WinSet[1](\alpha', \ASet \cup (\FSet \setminus \YSet), \VSet
			\cup (\FSet \cap \YSet)))$~\footnote{In~\cite{DMPV16} the set $\WinSet[0]$ is mistakenly typed as $\PsSet \setminus \mu\YSet(\WinSet[1](\alpha', \ASet \cup (\FSet \setminus \YSet), \VSet \cup (\FSet \cap \YSet)))$.
			Here, we provide its correct formulation.
		 	Please, note that in the proof of~\cite[Theorem 1]{DMPV16}, the formula is correctly reported.} and
	 
		\item
			$\WinSet[1](\FSet \cdot \alpha', \ASet, \VSet) =  \mu \YSet (\PsSet \setminus
			\WinSet[0](\alpha', \VSet \cup (\FSet \setminus \YSet), \ASet
			\cup (\FSet \cap \YSet)))$,

	\end{itemize}
	where $\mu$ is the least fixed-point operator\footnote{The unravelling of
	$\WinSet[0]$ and $\WinSet[1]$ has some analogies with the fixed-point
	formula introduced in~\cite{Wal96} also used to solve parity games.
	Unlike our work, however, the formula presented there is just a translation of
	the Zielonka's algorithm~\cite{Zie98}.}.

	To better understand how \APT\ solves a parity game we show a simple piece of
	execution on the example in Fig \ref{fig:pargame}.
	It is easy to see that such parity game is won by Player $\frsentnam$ in all
	the possible starting nodes.
	Then, the fixpoint returns the entire set $\PsSet$.
	The parity condition is given by  $\alpha = \FSet_1 \cdot \FSet_2 \cdot
	\FSet_3 \cdot \FSet_4 \cdot \FSet_5$, where $\FSet_1 = \{\qSym[1]\}$,
	$\FSet_2 = \{ \qSym[3], \qSym[4], \qSym[6]\}$, $\FSet_3 = \{ \qSym[0] \}$, 
	$\FSet_4 = \emptyset$, $\FSet_5 = \{ \qSym[5], \qSym[6] \}$.
	The repeated application of functions $\WinSet[0](\alpha, \VSet,
	\ASet)$ and $\WinSet[1](\alpha, \ASet, \VSet)$ returns:

	\vspace{-0.5em}	
	$$\WinSet[0](\alpha, \emptyset, \emptyset) = \mu \YSet^1 ( \PsSet \setminus \mu
	\YSet^2(\PsSet \setminus \mu \YSet^3 (\PsSet \setminus \mu \YSet^4 ( \PsSet
	\setminus \mu \YSet^5(\PsSet \setminus 	\sforce(\VSet^6))))))$$

	in which the sets $\YSet[][i]$ are the nested fixpoint of the formula, while
	the set $\VSet[][6]$ is obtained by recursively applying the following:

	\begin{itemize}
		\item
			$\VSet^1 = \emptyset$, $\VSet^{i + 1} = \ASet^i \cup (\FSet_i \setminus
			\YSet^i)$, and
		\item
			$\ASet^1 = \emptyset$, $\ASet^{i + 1} = \VSet^i \cup (\FSet_i \cap
			\YSet^i)$.
	\end{itemize}
	
	As a first step of the fixpoint computation, we have that
	$\YSet^1 = \YSet^2 = \YSet^3 = \YSet^4 = \YSet^5 = \emptyset$.
	Then, by following the two iterations above for the example in
	Figure~\ref{fig:pargame}, we obtain that $\VSet[][6] = \{ \qSym[0], \qSym[1],
	\qSym[2], \qSym[5] \}$.
       
%
	At this point we have that $\sforce(\VSet^6) = \{\qSym[0],\qSym[1],\qSym[5],
	\qSym[6]\} 	\neq \emptyset = \YSet[][5]$.
	This means that the fixpoint for $\YSet[][5]$ has not been reached yet.
	Then, we update the set $\YSet[][5]$ with the new value and compute again
	$\VSet[][6]$.
	This procedure is repeated up to the point in which $\sforce(\VSet^6) =
	\YSet[][5]$, which means that the fixpoint for $\YSet[][5]$ has been reached.
	Then we iteratively proceed to compute $\YSet[][4] = \PsSet \setminus
	\YSet[][5]$ until a fixpoint for $\YSet[][4]$ is reached.
	Note that the sets $\ASet[][i]$ and $\VSet[][i]$ depends on the $\YSet[][i]$
	and so they need to be updated step by step.
	As soon as a fixpoint for $\YSet[1]$ is reached, the algorithm returns the set
	$\PsSet \setminus \YSet[1]$.
	As a fundamental observation, note that, due to the fact that the fixpoint
	operations are nested one to the next, updating the value of $\YSet[][i]$
	implies that every $\YSet[][j]$, with $j>i$, needs to be reset to the
	empty set.
	
	We now prove the correctness of this procedure.
	Note that the algorithm is an adaptation of the one provided by Kupferman and
	Vardi in~\cite{KV98}, for which a proof of correctness has never been shown.
	
	\begin{theorem}
		Let $\GName = \ArStr[EPG]$ be an \EPG with $\alpha$ being the parity
		sequence condition.
		Then, the following properties hold.
		
		\begin{enumerate}
			\item
				If $\alpha= \varepsilon$ then $\WinSet[0](\GName) =
				\WinSet[0](\alpha, \VSet, \ASet)$ and $\WinSet[1](\GName) =
				\WinSet[1](\alpha, \VSet, \ASet)$;
				
			\item
				If $\alpha$ starts with an odd parity set, it holds that
				$\WinSet[0](\GName) = \WinSet[0](\alpha, \VSet, \ASet)$;
				
			\item
				If $\alpha$ starts with an even parity set, it holds that
				$\WinSet[1](\GName) = \WinSet[1](\alpha, \VSet, \ASet)$.
		\end{enumerate}

	\end{theorem}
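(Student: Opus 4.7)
The plan is to proceed by strong induction on $|\alpha|$, taking statement (1) as the base case and treating statements (2) and (3) as symmetric inductive steps; I will describe (2) in detail, with (3) following by interchanging the roles of the two players. For $\alpha = \varepsilon$, the parity sequence partitions $\PsSet \setminus (\VSet \cup \ASet)$, so we obtain $\PsSet = \VSet \cup \ASet$, and the second node of any play already settles the EPG winning condition. Hence Player~$\frsentnam$ wins from $\qElm$ iff he can force the next move into $\VSet$, which is exactly $\fforce(\VSet) = \WinSet[0](\varepsilon, \VSet, \ASet)$; the dual argument yields $\WinSet[1](\GName) = \sforce(\ASet) = \WinSet[1](\varepsilon, \VSet, \ASet)$.

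For the inductive step with $\alpha = \FSet[1] \cdot \alpha'$ and $\FSet[1]$ of odd parity, the tail $\alpha'$ is either empty or begins with an even set (since colors alternate), so the inductive hypothesis supplies the correctness of $\WinSet[1](\alpha', \cdot, \cdot)$ on any EPG whose parity sequence is $\alpha'$. Unfolding the least fixed point $\mu \YSet$ as the ascending chain $\YSet^{0} \subseteq \YSet^{1} \subseteq \cdots \subseteq \YSet^{*}$, I would define, for each $k$, the auxiliary EPG $\GName^{k}$ obtained from $\GName$ by replacing its parity sequence with $\alpha'$, its visiting set with $\VSet \cup (\FSet[1] \cap \YSet^{k})$, and its avoiding set with $\ASet \cup (\FSet[1] \setminus \YSet^{k})$. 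Combining the inductive hypothesis with determinacy of EPGs (via their reduction to ordinary parity games noted earlier), one obtains $\YSet^{k+1} = \PsSet \setminus \WinSet[1](\GName^{k}) = \WinSet[0](\GName^{k})$.

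For the $\supseteq$ inclusion I would construct a winning strategy for Player~$\frsentnam$ from each $\qElm \in \YSet^{*}$ by inner induction on the least $k$ with $\qElm \in \YSet^{k+1}$, playing according to Player~$\frsentnam$'s memoryless winning strategy in $\GName^{k}$. Any consistent play then either (i) reaches $\VSet$ and wins immediately in $\GName$; (ii) stays forever in $\PsSet \setminus (\VSet \cup \ASet \cup \FSet[1])$ while satisfying $\alpha'$'s Player~$\frsentnam$ condition, in which case the minimum priority visited infinitely often under $\alpha$ coincides with that under $\alpha'$ and is even; or (iii) enters $\FSet[1] \cap \YSet^{k}$, where the inner IH furnishes a strategy at strictly smaller stage. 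These regimes compose into a play visiting $\FSet[1]$ only finitely often or else winning outright via $\alpha'$, so the minimum priority visited infinitely often in $\GName$ is even.

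For the $\subseteq$ inclusion, suppose $\qElm \notin \YSet^{*}$; then $\qElm \in \WinSet[1](\GName^{*})$ by the inductive hypothesis, and Player~$\scnentnam$ follows his memoryless winning strategy in $\GName^{*}$. Every consistent play avoids $\VSet \cup (\FSet[1] \cap \YSet^{*})$ and either reaches $\ASet$, reaches $\FSet[1] \setminus \YSet^{*}$, or remains in $\PsSet \setminus (\VSet \cup \ASet \cup \FSet[1])$ while winning $\alpha'$ for Player~$\scnentnam$. The fixed-point equation returns every node of $\FSet[1] \setminus \YSet^{*}$ to $\WinSet[1](\GName^{*})$, so the same memoryless strategy can be reused; in the limit, either $\ASet$ is eventually hit or $\FSet[1]$ is visited infinitely often, in which case its odd priority becomes the minimum visited infinitely often. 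The main obstacle is the bookkeeping around the nested fixed-point unfolding: I will need to verify carefully that the three possible fates of a $\GName^{k}$-play integrate soundly into the outer $\alpha$-condition, and that Player~$\scnentnam$'s memoryless strategy can be safely reused across repeated re-entries into $\FSet[1] \setminus \YSet^{*}$ without disrupting the parity analysis.
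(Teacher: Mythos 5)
Your proposal follows essentially the same route as the paper's proof: induction on the length of $\alpha$ (with the two player-indexed statements handled symmetrically), using the inductive hypothesis to read $\PsSet \setminus \WinSet[1](\alpha', \ASet \cup (\FSet \setminus \YSet), \VSet \cup (\FSet \cap \YSet))$ as Player $\frsentnam$'s winning set in the auxiliary game, and then analysing the least-fixed-point iteration; your two-inclusion strategy-composition argument is precisely the content of the paper's closing ``it is not hard to see'' step, spelled out in more detail. The one point you defer but the paper states explicitly is the monotonicity of the operator in $\YSet$ (enlarging $\YSet$ only moves nodes of $\FSet$ from avoiding to visiting, which can only enlarge Player $\frsentnam$'s winning set), which is needed to justify the ascending-chain unfolding and is routine.
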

	
	\begin{proof}
		
		The proof of Item~1 follows immediately by definition, as $\alpha =
		\epsilon$ forces the two players to reach their respective winning sets in
		one step.
		
		For Item~2 and ~3, we need to find a partition of $\FSet$ into a winning set
		for Player $\frsentnam$ and a winning set for Player $\scnentnam$ such that
		the game is invariant \wrt the winning sets, once they are moved to visiting
		and avoiding, respectively.
		We proceed by mutual induction on the length of the sequence $\alpha$.
		As base case, assume $\alpha = \FSet$ and $\FSet$ to be an odd parity set.
		Then, first observe that Player $\frsentnam$ can win only by eventually
		hitting the set $\VSet$, as the parity condition is made by only odd
		numbers.
		We have that $\WinSet[0](\FSet, \VSet, \ASet) = \mu \YSet
		(\PsSet \setminus \WinSet[1](\varepsilon, \ASet \cup (\FSet \setminus
		\YSet), \VSet \cup (\FSet \cap (\YSet)))) = \mu \YSet (\PsSet \setminus
		\sforce(\ASet \cup (\FSet \setminus \YSet)))$ that, by definition,
		computes the set from which Player $\scnentnam$ cannot avoid a visit to
		$\VSet$, hence the winning set for Player $\frsentnam$.
		In the case the set $\FSet$ is an even parity set the reasoning is
		symmetric.
		
		As an inductive step, assume that Items~2 and~3 hold for sequences $\alpha$
		of length $n$, we prove that it holds also for sequences of the form $\FSet
		\cdot \alpha$ of length $n + 1$.
		Suppose that $\FSet$ is a set of odd priority.
		Then, we have that, by induction hypothesis, the formula $\WinSet[1](\alpha,
		\ASet \cup (\FSet \setminus \YSet), \VSet \cup (\FSet \cap \YSet))$ computes
		the winning set for Player $\scnentnam$ for the game in which the
		nodes in $\FSet \cap \YSet$ are visiting, while the nodes in $\FSet
		\setminus \YSet$ are avoiding.
		Thus, its complement $\PsSet \setminus \WinSet[1](\alpha, \ASet \cup (\FSet
		\setminus \YSet), \VSet \cup (\FSet \cap \YSet))$ returns the winning set
		for Player $\frsentnam$ in the same game.
		Now, observe that, if a set $\YSet'$ is bigger than $\YSet$, then $\PsSet
		\setminus \WinSet[1](\alpha, \ASet \cup (\FSet \setminus \YSet'), \VSet
		\cup (\FSet \cap \YSet'))$ is the winning set for Player $\frsentnam$ in
		which some node in $\FSet \setminus \YSet$ has been moved from avoiding
		to visiting.
		Thus we have that $\PsSet \setminus \WinSet[1](\alpha, \ASet \cup
		(\FSet \setminus \YSet), \VSet \cup (\FSet \cap \YSet)) \subseteq \PsSet
		\setminus \WinSet[1](\alpha, \ASet \cup (\FSet \setminus \YSet'), \VSet
		\cup (\FSet \cap \YSet'))$.
		Moreover, observe that, if a node $\qElm \in \FSet \cup \ASet$ is winning for
		Player $\frsentnam$, then it can be avoided in all possible winning plays,
		and so it is winning also in the case $\qElm$ is only in $\FSet$.
		It is not hard to see that, after the last iteration of the fixpoint
		operator, the two sets $\FSet \setminus \YSet$ and $\FSet \cap \YSet$ can be
		considered in avoiding and winning, respectively, in a way that the winning
		sets of the game are invariant under this update, which concludes the proof
		of Item~2.
		
		Also in the inductive case, the reasoning for Item~3 is perfectly symmetric
		to the one for Item~2.
	\end{proof}

\end{section}





\begin{section}{Implementation of \APT\ in \PGSolver}
	\label{sec:imp}

\setlength{\columnsep}{-4pt}%

		\begin{wrapfigure}[19]{r}{0.55\textwidth}%
			\vspace{-4.8em}
			\begin{flushleft}

\begin{lstlisting}[basicstyle={\scriptsize},belowskip={0.006cm},frame=none,
language=Python,mathescape=true,tabsize=1,showstringspaces=false]
			
			$\mathbf{fun}$ win $i$ $\mathcal{G}$ $\PsSet$ $\alpha$ $\VSet$ $\ASet$ =
			 if($\alpha \neq \epsilon)$ $\mathbf{then}$
			  W := $\PsSet\ \setminus$(min_fp (1-$i$) $\mathcal{G}$ $\PsSet$ $  \alpha\ 
			\ASet\ \VSet$);
			 else
				 W :=$\force_i( \VSet) $;
			 $\mathbf{return}$ W;;

			$\mathbf{fun}$ min_fp $i$ $\mathcal{G}$ $\PsSet$ $\alpha$ V A  =
			 $\YSet_1$ := $\emptyset $;
			 $\YSet_2$ := $\emptyset $;
			 $\FSet$ := head[$\alpha$];
			 $\alpha'$ := tail[$\alpha$]; 
			 $\VSet'$ := $\VSet \cup \FSet$;
			 $\ASet'$ := $\ASet$;
			 
			 $\YSet_2$ := win $i$ $\mathcal{G}$ $\PsSet$ $\alpha'$ $\VSet'$ $\ASet'$ ;

			 while( $\YSet_2 \neq \YSet_1$) $\mathbf{do}$ 
			  $\YSet_1$ := $\YSet_2$ ;
			  $\VSet'$ := $\VSet \cup (\FSet \cap \YSet_1)$;
			  $\ASet'$ := $\ASet \cup (\FSet \setminus \YSet_1$);
			  $\YSet_2$ := win $i$ $\mathcal{G}$ $\PsSet$ $\alpha'$  $\VSet'$ $\ASet'$ ;
			 $\mathbf{done}$	     
			 $\mathbf{return}$ $\YSet_2$;;

		
			\end{lstlisting}
			
			\end{flushleft}
			
				
				
\vspace{-0.4cm}
\protect\caption{APT Algorithm}
\label{APT}\vspace{-2.3cm}
\end{wrapfigure}
		 
		In this section we describe the implementation of \APT\ in the well-known
		platform \PGSolver\ developed in OCaml by Friedman and Lange~\cite{FL09},
		which collects the large majority of the algorithms introduced in the
		literature to solve parity games~\cite{Zie98, Jur00, HKLN12, VJ00, Sch08,
		JPZ08, Sch07}.
		
		We briefly recall the main aspects of this platform.
		The graph data structure is represented as a fixed length array
		of tuples.
		Every tuple has all information that a node needs, such as the owner
		player, the assigned priority and the adjacency list of nodes.
		The platform implements a collection of tools to generate and solve parity
		games, as well as compare the performance of different algorithms.
		The purpose of this platform is not just that of making available an environment 
		to deploy and test a generic solution algorithm, but also to
		investigate the practical aspects of the different algorithms on the
		different classes of parity games.
		Moreover, \PGSolver\ implements optimizations that can be applied to all
		algorithms in order to improve their performance.
		The most useful optimizations in practice are decomposition into strongly
		connected components, removal of self-cycles on nodes, and priority
		compression.

		We have added to \PGSolver\ an implementation of the \APT\
		algorithm introduced in Section~\ref{sec:pargam}.
		Our procedure applies the fixpoint algorithm to compute the set of winning
		positions in the game by means of two principal functions that implement
		the two functions of the algorithm core processes, i.e.,  function
		$\force_{i}$ and the recursive function $\WinSet_{i}(\alpha,V,A)$.
		The pseudocode of the \APT\ algorithm implementation is reported in Figure
		\ref{APT}. It takes six parameters: the
		Player ($\frsentnam$ or~$\scnentnam$), the game, the set of
		nodes, the condition $\alpha$, the set of visiting and avoiding.
		Moreover, we define the function \emph{min\_fp} for the calculation
		of the fixed point.
		The whole procedure makes use of Set and List data structures, which are
		available in the OCaml's standard library, for the manipulation of the sets
		visiting and avoiding, and the accepting condition $\alpha$.
    The tool along with the implementation of the \APT\ algorithm is available
		for	download from \texttt{https://github.com/antoniodistasio/pgsolver-APT}.
	
For the sake of clarity, we report that in \PGSolver\ it is used the
maximal priority to decide who wins a given parity game. Conversely, the \APT\
algorithm uses the minimal priority. However, these two conditions are
well known to be equivalent and, in order to compare instances of the same game
on different implementations of parity games algorithms in \PGSolver, we simply
convert the game to the specific algorithm accordingly. For the conversion, we
simply use a suitable permutation of the priorities.

\end{section}




\begin{section}{Experiments}
	\label{sec:exp}

	In this section, we report the experimental results on evaluating the performance 
  for the \APT\ algorithm implemented in \PGSolver\ over the random benchmarks
	generated in the platform. We have compared the performance of the
	implementation of \APT\	with those of \RE\ and \SP. We have chosen
	these two algorithms as they have been proved to be the best-performing in
	practice~\cite{FL09}.
	
	All tests have been run on an AMD Opteron 6308 @2.40GHz, with 224GB of RAM and
	128GB of swap running Ubuntu 14.04.
	We note that \APT\ has been executed without applying any optimization
	implemented in \PGSolver~\cite{FL09}, while \SP\ and \RE\ are run with such
	optimizations.
	Applying these optimization on \APT\ is a topic of further research.
	
	We evaluated the performance of the three algorithms over a set of games that
	are randomly generated by \PGSolver, in which it is possible to give the
	number $n$ of states and the number $k$ of priority as parameters.
	We have taken $20$ different game instances for each set of parameters and
	used the average time among them returned by the tool.
	For each game, the generator works as follows.
	For each node $\psElm$ in the graph-game, the priority $\priFun(\psElm)$ is
	chosen uniformly between $0$ and $k - 1$, while its ownership is assigned to
	Player $\frsentnam$ with probability $\frac{1}{2}$, and to Player $\scnentnam$
	with probability $\frac{1}{2}$.
	Then, for each node $\psElm$, a number $d$ from $1$ to $n$ is chosen uniformly
	and $d$ distinct successors of $\psElm$ are randomly selected.

	\begin{subsection}{Experimental results}
		\label{sec:exp;sub:res}
		
		\begin{table}

			\begin{center}

			\scalebox{0.80}[0.80]{
				\begin{tabular}{|c||c|c|c|c|c|c|c|c|c|c|c|c}

					\hline

					{\small{}}  & \multicolumn{3}{ |c| }{2 Pr}  & \multicolumn{3}{|c|}{3
					Pr} & \multicolumn{3}{ |c| }{5 Pr} \\

					{\small{}$n$}  & {\small{}\RE} & {\small{}\SP} &  {\small{}\APT} & 
					{\small{}\RE} & {\small{}\SP} &  {\small{}\APT} & 
					{\small{}\RE} & {\small{}\SP} &  {\small{}\APT}  \\
					\hline
					\hline
					{\scriptsize{}2000} &   
					{\scriptsize{}4.94} & {\scriptsize{}5.05} & {\scriptsize{}0.10} &
					{\scriptsize{}4.85} & {\scriptsize{}5.20} & {\scriptsize{}0.15} &
					{\scriptsize{}4.47} & {\scriptsize{}4.75} &
					{\scriptsize{}0.42} \\
					\hline
					{\scriptsize{}4000} &  
					{\scriptsize{}31.91} & {\scriptsize{}32.92} & {\scriptsize{}0.17} &
					{\scriptsize{}31.63} & {\scriptsize{}31.74} & {\scriptsize{}0.22} &
					{\scriptsize{}31.13} & {\scriptsize{}32.02} & {\scriptsize{}0.82} 
					\\
					\hline
					{\scriptsize{}6000} &  
					{\scriptsize{}107.06} & {\scriptsize{}108.67} & {\scriptsize{}0.29} &
					{\scriptsize{}100.61} & {\scriptsize{}102.87} & {\scriptsize{}0.35} &
					{\scriptsize{}100.81} & {\scriptsize{}101.04} & {\scriptsize{}1.39} 
					\\
					\hline
					{\scriptsize{}8000} & 
					{\scriptsize{}229.70} & {\scriptsize{}239.83} & {\scriptsize{}0.44} &
					{\scriptsize{}242.24} & {\scriptsize{}253.16} & {\scriptsize{}0.5}  &
					{\scriptsize{}228.48} & {\scriptsize{}245.24} & {\scriptsize{}2.73} 
					\\
					\hline
					{\scriptsize{}10000} & 
					{\scriptsize{}429.24} & {\scriptsize{}443.42} & {\scriptsize{}0.61} &
					{\scriptsize{}482.27} & {\scriptsize{}501.20} & {\scriptsize{}0.85} &
					{\scriptsize{}449.26} & {\scriptsize{}464.36} & {\scriptsize{}3.61} 
					\\
					\hline
					{\scriptsize{}12000} &  
					{\scriptsize{}772.60} & {\scriptsize{}773.76} & {\scriptsize{}0.87} &
					{\scriptsize{}797.07} & {\scriptsize{}808.96} & {\scriptsize{}0.98} &
					{\scriptsize{}762.89} & {\scriptsize{}782.53} & {\scriptsize{}6.81} 
					\\
					\hline
					{\scriptsize{}14000} &
					{\scriptsize{}1185.81} & {\scriptsize{}1242.56} & {\scriptsize{}1.09}&
					{\scriptsize{}1227.34} & {\scriptsize{}1245.39} & {\scriptsize{}1.15}&
					{\scriptsize{}1256.32} & {\scriptsize{}1292.80} &
					{\scriptsize{}10.02} \\
					\hline

				\end{tabular}
				}

			\vspace{0.2cm}

			\caption{Runtime executions with fixed priorities $2$, $3$ and $5$}
			\label{tab2}
			\end{center}
			
			\vspace{-4em}

		\end{table}

		We ran two experiments.
		First, we tested games with $2$, $3$, and $5$ priorities, where for each of
		them we measured runtime performance for different state-space sizes,
		ranging in $\{ 2000, 4000, 6000, 8000 , 10000, 12000, 14000 \}$.
		The results are in Table~\ref{tab2}, in which the number of states is
		reported in column~1, the number of colors is reported in the
		macro-column~2, 3, and 5, each of them containing the runtime executions,
		expressed in seconds, for the three algorithms.
		\begin{wraptable}[23]{r}{0.39\textwidth}
			
			\vspace{-3em}
			
			\begin{center}
				
				\scalebox{0.80}[0.80]{
				\begin{tabular}{c||c|c|c|c}
					\hline
					
					\multicolumn{1}{c}{{\small $n$}} & \multicolumn{1}{c}{{\small Pr}} & 
					\multicolumn{1}{c}{{\RE}} & \multicolumn{1}{c}{{\SP}} & {\APT} \\
					\multicolumn{5}{c}{} \\[-1em]
					\multicolumn{5}{c}{\small $n = 2^{k}$} \\
					\hline
					\hline
					{\scriptsize{}1024} & {\scriptsize{}10} &  
					{\scriptsize{}1.25} & {\scriptsize{}1.25} & {\scriptsize{}8.58}\\
					\hline
					{\scriptsize{}2048} & {\scriptsize{}11} & 
					{\scriptsize{}7.90} & {\scriptsize{}8.21} & {\scriptsize{}71.08}\\
					\hline
					{\scriptsize{}4096} & {\scriptsize{}12} & 
					{\scriptsize{}52.29} & {\scriptsize{}52.32} & 
					{\scriptsize{}1505.75}\\
					\hline
					{\scriptsize{}8192} & {\scriptsize{}13} & 
					{\scriptsize{}359.29} & {\scriptsize{}372.16} & 
					{\scriptsize{}\abortT}\\
					\hline
					{\scriptsize{}16384} & {\scriptsize{}14} & 
					{\scriptsize{}2605.04} & {\scriptsize{}2609.29} & 
					{\scriptsize{}\abortT}\\
					\hline
					{\scriptsize{}32768} & {\scriptsize{}15} & 
					{\scriptsize{}\abortT} & {\scriptsize{}\abortT} & 
					{\scriptsize{}\abortT}\\
					\hline					
					\hline
					
					\multicolumn{5}{c}{\small $n = e^{k}$} \\
					\hline
					\hline
					\scriptsize{21} & \scriptsize{3} & 
					\scriptsize{0} & \scriptsize{0} & \scriptsize{0} \\
					\hline
					\scriptsize{55} & \scriptsize{4} & 
					\scriptsize{0} & \scriptsize{0} & \scriptsize{0.02} \\
					\hline
					\scriptsize{149} & \scriptsize{5} & 
					\scriptsize{0.01} & \scriptsize{0.01} & \scriptsize{0.08}  \\
					\hline
					\scriptsize{404} & \scriptsize{6} & 
					\scriptsize{0.14} & \scriptsize{0.14} & \scriptsize{0.19}\\
					\hline
					\scriptsize{1097} & \scriptsize{7} & 
					\scriptsize{1.72} & \scriptsize{1.72} & \scriptsize{0.62} \\
					\hline
					\scriptsize{2981} & \scriptsize{8} & 
					\scriptsize{24.71} & \scriptsize{24.46} & \scriptsize{7.88}\\
					\hline
					\scriptsize{8104} & \scriptsize{9} & 
					\scriptsize{413.2.34} & \scriptsize{414.65} & \scriptsize{35.78}\\
					\hline
					\scriptsize{22027} & \scriptsize{10} & 
					\scriptsize{\abortT} & \scriptsize{\abortT} & \scriptsize{311.87}\\

					\hline
					\multicolumn{5}{c}{} \\[-1em]
					\multicolumn{5}{c}{\small $n = 10^{k}$} \\
					\hline
					\hline
					{\scriptsize{}10} & {\scriptsize{}1} &  
					{\scriptsize{}0} & {\scriptsize{}0} & {\scriptsize{}0}\\
					\hline
					{\scriptsize{}100} & {\scriptsize{}2} & 
					{\scriptsize{}0} & {\scriptsize{}0} & {\scriptsize{}0}\\
					\hline
					{\scriptsize{}1000} & {\scriptsize{}3} & 
					{\scriptsize{}1.3} & {\scriptsize{}1.3} & {\scriptsize{}0.04}\\
					\hline
					{\scriptsize{}10000} & {\scriptsize{}4} & 
					{\scriptsize{}738.86} & {\scriptsize{}718.24} & 
					{\scriptsize{}4.91}\\
					\hline
					{\scriptsize{}100000} & {\scriptsize{}5} & 
					{\scriptsize{}\abortM} & {\scriptsize{}\abortM} & 
					{\scriptsize{}66.4}\\
					\hline

					\hline
					
				\end{tabular}
				}
				
				\vspace{-0.2em}
				\caption{Runtime executions with $n=e^k$ and $n=2^k$ and $n=10^k$}
				\label{table}
			\end{center}
			
			\vspace{-0.8cm}
			
		\end{wraptable}
		Second, we evaluated the algorithms on games with an exponential
	number of nodes \wrt the number of priorities.
	More precisely, we ran experiments for $n=2^k$, $n=e^k$ and $n=10^k$, where
	$n$ is the number of states and $k$ is the number of priorities.
	
		The experiment results are reported in Table~\ref{table}.
		By \abortT, we denote that the execution has been aborted due to time-out
		(greater of one hour),
		while by \abortM we denote that the execution has been aborted due to
		mem-out.
	
		The first experiment shows that with a fixed number of priorities ($2$, $3$,
		and $5$) \APT\ significantly outperforms the other algorithms, showing
		excellent runtime execution even on fairly large instances.
		For example, for $n = 14000$, the running time for both \RE\ and \SP\ is
		about $20$ minutes, while for \APT\ it is less than a minute.

		The results of the exponential-scaling experiments, shown in
		Table~\ref{table}, give more nuanced results.
		Here, \APT\  is the best performing algorithm for $n = e^k$ and $n = 10^k$.
		For example, when $n=100000$ and $k=5$, both \RE\ and \SP\ memout, while
		\APT\ completes in just over one minute.
		That is, the efficiency of \APT\ is notable also in terms of memory usage.
		At the same \APT\ underperforms for $n=2^k$. Our conclusion is that \APT\
		has superior performance when the number of priorities is logarithmic in the
		number of game-graph nodes, but the base of the logarithm has to be large
		enough.
		As we see experimentally, $e$ is sufficiently large base, but $2$ is not.
		This point deserve further study, which we leave to future work.
		In Figure~\ref{plot} we just report graphically the benchmarks in the case
		$n = e^k$.
		An interested reader can find more detailed experiment results at
		\texttt{https://github.com/antoniodistasio/pgsolver-APT}.
		
	\begin{figure}[!h]
 		\centering
 		\vspace{-1em}
 		\includegraphics[scale=0.30]{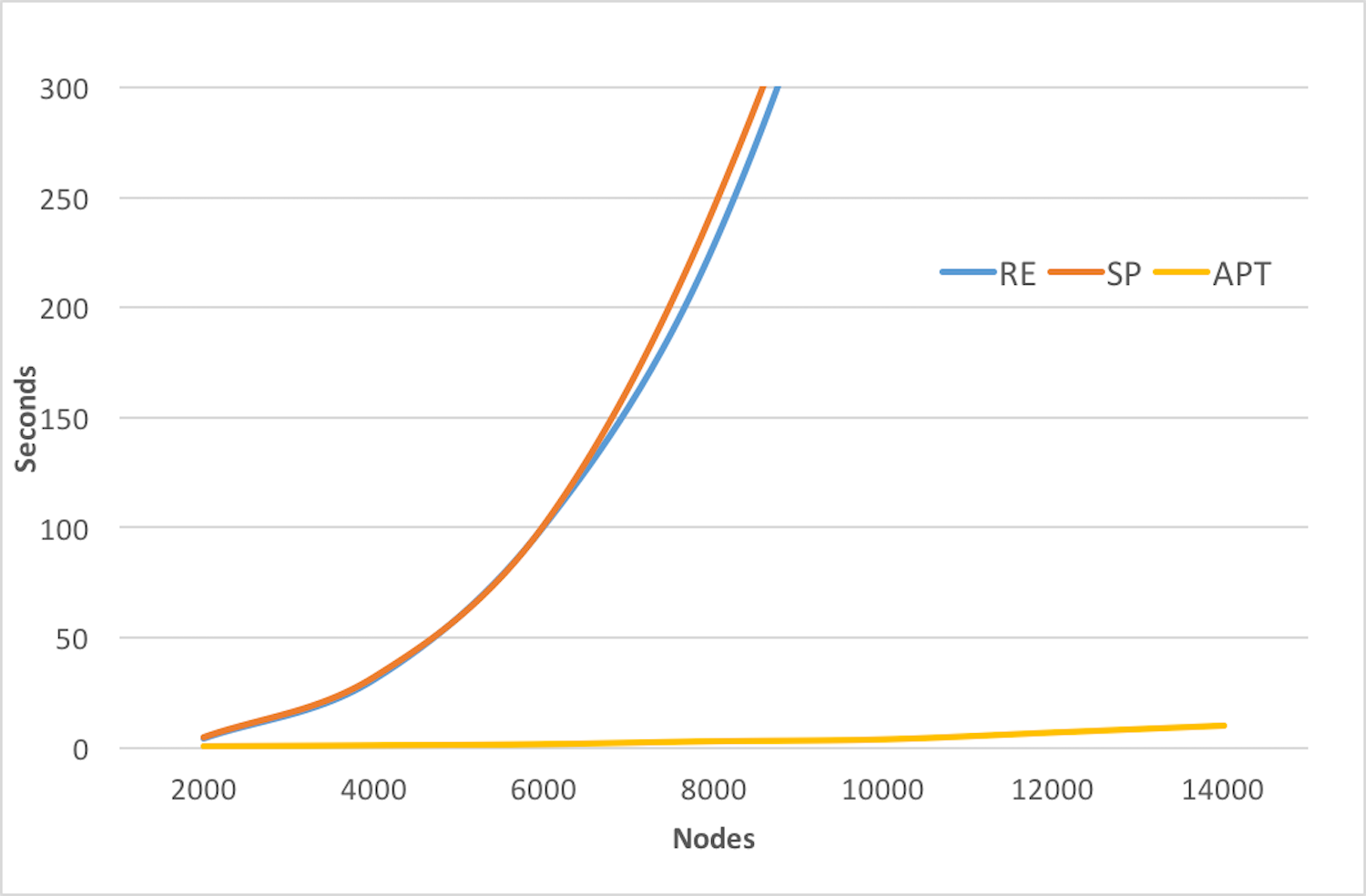}
 		\vspace{-1em}
 		\caption{Runtime executions with $n = e^k$}
 		\vspace{-3em}
 		\label{plot}
  
 	\end{figure}

\end{subsection}
\end{section}



\begin{section}{Conclusion}

	The \APT\ algorithm, an automata-theoretic technique to solve parity games,
	has been designed two decades ago by Kupferman and Vardi~\cite{KV98}, but
	never considered to be useful in practice~\cite{FL09PGSolver}.
	In this paper, for the first time, we fill missing gaps and implement this
	algorithm.
	By means of benchmarks based on random games, we show that it is the best
	performing algorithm for solving parity games when the number of priorities is
	very small \wrt the number of states.
	We believe that this is a significant result as several applications of parity
	games to formal verification and synthesis do yield games with a very small
	number of priorities. 
	\\ \indent
	The specific setting of a small number of priorities opens up opportunities
	for specialized optimization technique, which we aim to investigate in future
	work.
	This is closely related to the issue of accelerated algorithms for 
        three-color parity games~\cite{DF07}. We also plan to study why the 
performance of the 
	\APT\ algorithm is so sensitive to the relative number of priorities, as 
shown in 
	Table~\ref{table}.

\end{section}


	\footnotesize
	\bibliographystyle{plain}
	\bibliography{References}

\end{document}